\newtheorem{definition}{Definition}
\newtheorem{lemma}{Lemma}
\newtheorem{proposition}{Proposition}
\newtheorem{algorithm}{Algorithm}
\definecolor{darkcyan}{rgb}{0.0, 0.55, 0.55}
\newcommand{\todo}[1]{{\color{red}{#1}}}
 \newcommand{\halfquad}{\hspace{0.5em}} 
\begin{document}

\title{Possibilistic approach to network nonlocality}

\author{Antoine Restivo}
\affiliation{Department of Applied Physics University of Geneva, 1211 Geneva, Switzerland}
\affiliation{Centre for Quantum Information and Communication, École polytechnique de Bruxelles, Université libre de Bruxelles, CP 165, 1050 Brussels, Belgium}
\author{Nicolas Brunner}
\affiliation{Department of Applied Physics University of Geneva, 1211 Geneva, Switzerland}
\author{Denis Rosset}
\affiliation{Department of Applied Physics University of Geneva, 1211 Geneva, Switzerland}

\maketitle

\begin{abstract}
The investigation of Bell nonlocality traditionally relies on joint probabilities of observing certain measurement outcomes. In this work we explore a possibilistic approach, where only patterns of possible outcomes matter, and apply it to Bell nonlocality in networks with independent sources. We present various algorithms for determining whether a given outcome pattern can be achieved via classical resources or via non-signaling resources. Next we illustrate these methods considering the triangle and square networks (with binary outputs and no inputs), identifying patterns that are incompatible with the network structure, as well as patterns that imply nonlocality. In particular, we obtain an example of quantum nonlocality in the square network with binary outcomes. Moreover, we show how to construct certificates for detecting the nonlocality of a certain pattern, in the form of nonlinear Bell-type inequalities involving joint probabilities. Finally, we show that these inequalities remain valid in the case where the sources in the network become partially correlated. 
\end{abstract}

\section{Introduction}

Recently, growing interest has been devoted to the question of quantum
nonlocality in networks. From a fundamental perspective, this new area of
research aims at characterizing quantum correlations in a network
configuration, where several independent sources distribute entangled states
to subsets of nodes (parties). From a more practical point of view, these
ideas are also connected to the development of future quantum
communication networks.

This area of research already brought significant contributions; see e.g. \cite{Tavakoli_Review} for a recent review. The key idea is to characterize correlations in networks under the assumption that the different sources are independent from each other \cite{Branciard_2010}. Notably, it
was shown that the network configuration allows for novel forms of quantum
nonlocal correlations, departing radically from standard quantum Bell
nonlocality. The first striking example is the fact that quantum nonlocality
can be demonstrated without the needs of different measurement settings (i.e.
each party performs a single fixed measurement) \cite{Fritz_2012,Branciard_2012}. This effect is known as
``quantum nonlocality without inputs'', of which more examples have been reported \cite{fraser2018causal,Renou_2019,Renou2022,Gisin_2019,Abiuso2022,Boreiri2022}. Moreover, the use of non-classical
measurements (such as the well-known Bell-state measurement) allows for
quantum nonlocal correlations that are genuine to the network structure \cite{Supic2022,supic2022b}.

More generally, it is fair to say that a general
understanding of nonlocality in networks is still far away. Among the main
challenges is the characterisation of local correlations in a given network
structure (i.e. characterize those probability distributions that admit a
classical model), which turns out to be extremely challenging. All methods
developed in the context of (standard) Bell nonlocality are mostly useless
here, and novel methods must be developed, see e.g. \cite{Branciard_2010,Chaves2015, Rosset2016Bell,Wolfe2019,navascues2017inflation,Weilenmann_2018,Aberg2020,wolfe2019quantum,krivachy_neural_2020}. The main difficulty arises from the independence assumption of the various sources in the network, which makes the problem nonlinear and highly
nontrivial. Another challenging problem is to characterize the limits on
correlations imposed by the no-signaling principle \cite{Henson_2014,Wolfe2019,gisin2020constraints}. Both of these questions
are fundamental towards a better understanding of quantum nonlocality in
networks.

In the present work, we explore a possibilistic approach to the question of
nonlocality in networks. This approach provided deep insight in the
context of standard Bell nonlocality (most notably through the well-known
Hardy \cite{Hardy} and Greenberger-Horne-Zeilinger \cite{GHZ,Mermin} paradoxes). The possibilistic approach in the context of networks has been proposed by Wolfe, Spekkens and Fritz~\cite{Wolfe2019}, and Fraser~\cite{Fraser2020} developed an efficient algorithm for the case of local models, but no systematic investigation has been reported so far. As we will
see, this offers a new perspective on the problem of characterizing local,
quantum and no-signaling correlations in networks. 

The main idea is to move
away from probabilities, and consider only possibilities. That is, we are only
interested to know if an event (e.g. a certain combination of measurement
outputs) occurs with some non-zero probability or, on the contrary, is
impossible (i.e. has probability zero). We then ask which patterns (i.e. a
combination of possible and impossible events) are compatible with a local,
quantum and non-signaling model for a given network structure. We present
several methods to attack this question, relying on the
inflation technique, SAT solvers and efficient combinatorial algorithms. Next
we apply it to specific examples, namely the well-known ``triangle network''
and the square network (both for the case of binary outputs and without
inputs) and classify all patterns. Moreover, our methods allow for the
derivation of non-linear Bell-type inequalities for networks, which can then
be applied to actual probability distributions. We use these inequalities to
show that any pattern that is incompatible with a local model (assuming fully
independent sources) is also incompatible with a local model where the
different sources can be partially correlated \cite{Supic}.

We start in Section~\ref{Sec:GeneralizedBell} by describing the generalized
Bell scenarios in which we operate, and define not only the probability distributions
that are observed, but also the possibilistic patterns and their symmetries. We mention the triangle and square scenarios,
whose orbits we classify
in this paper. We describe the different sets of correlations (local, quantum and
non-signaling) and list the algorithms that can be used to test for membership.
In Section~\ref{Sec:NonsignalingCorrelations} we describe the algorithms
to decide whether a pattern or probability distribution is non-signaling. 
We define formally the inflations of a given scenario and the possibilistic
approach to the inflation technique. In Section~\ref{Sec:LocalCorrelations}
we do the same for the set of network-local correlations. We then
present in Sections~\ref{Sec:TriangleScenario} and~\ref{Sec:SquareScenario}
the classification results for the triangle and square scenarios, respectively. We also present an instance of quantum nonlocality in the square network with binary outcomes. Finally, in Section~\ref{Sec:Relaxation}, we show that we can derive inequalities
robust under a relaxation of the source independence assumption.

\section{Generalized Bell-like scenarios}
\label{Sec:GeneralizedBell}

We consider a generalized Bell-like scenario containing sources and observers,
where the sources distribute information of classical, quantum
or post-quantum nature to the observers. In turn, each observer processes
this information to provide a classical outcome taken from a finite set of
values; for simplicity we restrict observers to a single fixed
measurement. This leads to the formal definition below.

\begin{definition}
  A {\em scenario } $\mathbb{S}$ is described by a directed bipartite graph
  $\mathbb{S} = \left( \bm{\mathcal{S}}, \bm{\mathcal{O}},
  \mathcal{E} \right)$, where a set of sources $\mathcal{S}= \{ \mathcal{S}_i
  \}_{i \in \mathcal{I}}$ and a set of observers $\mathcal{O}= \{
  \mathcal{O}_j \}_{j \in \mathcal{J}}$ are connected by directed edges
  $\mathcal{E} \subseteq \{ (\mathcal{S}_i, \mathcal{O}_j) : i \in
  \mathcal{I}, j \in \mathcal{J} \}$, with the index sets $\mathcal{I}= \{ 1,
  \ldots, | \mathcal{I} | \}$ and $\mathcal{J}= \{ 1, \ldots, | \mathcal{J} |
  \}$. Each observer $\mathcal{O}_j$ produces an outcome $o_j \in \{ 1,
  \ldots, n_j \}$, and we include in $\mathbb{S}$ the number of outcome values
  $\{ n_j \}$ as a label on the observer vertices $\bm{\mathcal{O}}$.
\end{definition}

We observe the correlations in this scenario and describe them using a
probability distribution $P_{\{ \mathcal{O}_j \}_{j \in \mathcal{J}}} (o_1,
\ldots, o_{| \mathcal{J} |})$. The coefficients in $P$ are
nonnegative and obey the normalization constraint:
\begin{equation}
  \label{Eq:Normalization} \sum_{o_1, \ldots, o_{| \mathcal{J} |}} P (o_1,
  \ldots, o_{| \mathcal{J} |}) = 1.
\end{equation}
Without loss of generality, we assume that each observer is connected to at
least one source, otherwise the probability distribution can be factorized to
leave those observers out of the questions we tackle.

Geometrically speaking, we can enumerate the coefficients of $P_{\{
\mathcal{O}_j \}_{j \in \mathcal{J}}}$ into a vector $\vec{P}_{\mathcal{J}}
\in \mathbb{R}^{n_1 \ldots n_{| \mathcal{J} |}}$; in this paper we chose to
increment the outcome of the last observer first. Accordingly, we enumerate
the vectors of the Euclidean basis as $\{ \vec{e}_{o_1 \ldots o_{| \mathcal{J}
|}} \}_{o_1 \ldots o_{| \mathcal{J} |}}$.

\begin{definition}
  The elements of the Euclidean basis of the vector space of probability
  distributions are written:
  \begin{equation}
    \label{Eq:Euclidean} \left[ o_1 {\ldots o_{| \mathcal{J} |}}  \right]
    \equiv \vec{e}_{o_1 \ldots o_{| \mathcal{J} |}}
  \end{equation}
\end{definition}

We thus have:
\begin{equation}
  \label{Eq:ProbExpand} \vec{P} = \sum_{o_1 \ldots o_{| \mathcal{J} |}} P (o_1
  \ldots o_{| \mathcal{J} |}) \cdot [o_1 \ldots o_{| \mathcal{J} |}]
\end{equation}
\begin{definition}
  \label{Def:Marginal}Let $\mathcal{K} \subseteq \mathcal{J}$ be indices that
  describe a subset of observers. We write $P_{\{ \mathcal{O}_j \}_{j \in
  \mathcal{K}}}$ the corresponding {\em marginal probability distribution}
  and likewise the probability vector $\vec{P}_{\mathcal{K}}$.
\end{definition}

This vector can be computed according to
\begin{equation}
  \label{Eq:Marginal} \vec{P}_{\mathcal{K}} = M_{\mathcal{K}, \mathcal{J}}
  \cdot \vec{P}_{\mathcal{J}}
\end{equation}
where $M_{\mathcal{K}, \mathcal{J}}$ is a matrix with $0 / 1$ entries
that represents the marginalization step.
Probability distributions are described with nonnegative real
coefficients; marginalization and most operations used in this paper only
involve addition and multiplication. Thus the relevant algebraic structure is
the commutative semiring $(\mathbb{R}^{\geqslant 0}, +, \cdot)$ where
$\mathbb{R}^{\geqslant 0} = [0, \infty [$.

\subsection{Possibilities and patterns}\label{Sec:PossibilitiesAndPatterns}

We can obtain good insights about a given scenario by considering
{\em possibilities} instead of probabilities.

\begin{definition}
  A {\em possibility} $\mathsf{p} \in \{ \oslash, \checked \} = \mathbb{P}$
  is a binary value where $\oslash$ designates that a given
  event will never occur, while $\checked$
  designates that it can happen with non-zero probability.
\end{definition}

Moving to a possibilistic picture is of great interest since
for a given number of parties, the associated set of patterns is finite.
Indeed, for $n$ observers with $m$ possible outcomes, one can define $m^n$
possible patterns. Hence, it is possible to sort them completely. As we will
see, one can directly map the operations on possibilities to binary Boolean
algebra. This peculiarity allows us to formulate the marginal compatibility
problem into a Boolean satisfiability problem, which is easier to solve than
linear programs.

\

For two disjoint events with possibilities $\mathsf{p}_1$ and
$\mathsf{p}_2$, we can compute the possibility of either or both of them
happening as $\mathsf{p}_1 + \mathsf{p}_2$ with
\begin{equation}
  \oslash + \oslash = \oslash, \qquad \oslash + \checked = \checked + \oslash
  = \checked + \checked = \checked .
\end{equation}
Considering now two uncorrelated events with possibilities
$\mathsf{p}_1$ and $\mathsf{p}_2$, the possibility for both to occur
$\mathsf{p}_1 \cdot \mathsf{p}_2$ is given by
\begin{equation}
  \oslash \cdot \oslash = \oslash \cdot \checked = \checked \cdot \oslash =
  \oslash, \qquad \checked \cdot \checked = \checked .
\end{equation}
The set $\mathbb{P} = \{ \oslash, \checked \}$, equipped with addition $+$ and
the multiplication $\cdot$ described above is a commutative semiring
$(\mathbb{P}, +, \cdot)$. The semiring is ordered ($\oslash < \checked$), and
this order is compatible with addition.

Given a probability $P \in \mathbb{R}^{\geqslant 0}$, the corresponding
possibility $\mathsf{p}$ is
\begin{equation}
  \mathsf{p} = \left\{\begin{array}{ll}
    \oslash, & P = 0,\\
    \checked, & P > 0.
  \end{array}\right.
\end{equation}
The latter transformation defines a morphism $\varphi : \mathbb{R}^{\geqslant
0} \rightarrow \mathbb{P}$, and it is easy to verify that this morphism
preserves the semiring structure ($\varphi (x + y) = \varphi (x) + \varphi
(y)$, $\varphi (x \cdot y) = \varphi (x) \cdot \varphi (y)$) and the ordering
($x \leqslant y \Rightarrow \varphi (x) \leqslant \varphi (y)$). Using this
morphism, we can translate statements about probabilities into statements
about possibilities.

Two particular transformations are interesting. Let $x, y, z \in \mathbb{P}$.
Then:
\begin{equation}
  \label{Eq:SumToNope} x + y + z = \oslash \quad \Rightarrow \quad x = y = z =
  \oslash,
\end{equation}
\begin{equation}
  \label{Eq:SumToOK} x + y + z = \checked \halfquad \Rightarrow \halfquad (x =
  \checked) \vee (y = \checked) \vee (z = \checked) .
\end{equation}
From the probability distribution $\vec{P}_{\mathcal{J}} \in \mathbb{R}^N$, we
compute the {\em pattern} $\vec{\mathsf{P}}_{\mathcal{J}} \in
\mathbb{P}^N$ using the morphism $\vec{\mathsf{P}}_{\mathcal{J}} =
\varphi (\vec{P}_{\mathcal{J}})$. Each pattern
$\vec{\mathsf{P}}_{\mathcal{J}}$ corresponds to many distributions
$\vec{P}_{\mathcal{J}}$, and we call the set of those preimages
\begin{equation}
  \label{Eq:Realizations} \varphi^{- 1}
  (\vec{\mathsf{P}}_{\mathcal{J}}) \subset \mathbb{R}^N
\end{equation}
the {\em{realizations}} of the pattern
$\vec{\mathsf{P}}_{\mathcal{J}}$.

We note that not all patterns are physical. Using the morphism $\varphi$, we
rewrite \eqref{Eq:Normalization} into a possibilistic statement
\begin{equation}
  \label{Eq:PossibiliticNormalization} \sum_{o_1, \ldots, o_{| \mathcal{J} |}}
  \mathsf{P} (o_1, \ldots, o_{| \mathcal{J} |}) = \checked
\end{equation}
which is not satisfied by the pattern
$\vec{\mathsf{P}}_{\mathcal{J}} = (\oslash, \oslash, \ldots,
\oslash)$.

In a given scenario, the number of patterns satisfying the normalization
condition is $2^{o_1 \ldots o_{| \mathcal{J} |}} - 1$.

\subsection{Symmetries}

We work within the device-independent framework, and thus the labeling of
sources, observers and outcomes has no specific meaning. We consider in turn
relabelings of outcomes and relabelings of sources/observers.

We first consider the relabeling of outcomes. For a given $j$, consider the
classical outcome $o_j \in \{ 1, \ldots, | n_j | \}$. We can permute those
labels with a permutation $\pi_j \in S_{n_j}$ without changing the underlying
physics, with $S_{n_j}$ is the symmetric group of degree $n_j$.

\begin{definition}
  The {\em{group of outcome relabelings}} $G_{\textup{outcomes}}$ is the
  direct product $G_{\textup{outcomes}} = S_{n_1} \times \ldots \times S_{n_{|
  \mathcal{J} |}}$. Its elements are written $\pi = (\pi_1, \ldots, \pi_{|
  \mathcal{J} |}) \in G_{\textup{outcomes}}$.
\end{definition}

Consider $\pi \in G_{\textup{outcomes}}$ and a probability distribution
$\vec{P}$, we write using~\eqref{Eq:ProbExpand}:
\begin{equation}
  \label{Eq:RelabelingOutcomes} 
  \begin{split}
  \pi (\vec{P}) = \sum_{o_1 \ldots o_{|
  \mathcal{J} |}} P (o_1 \ldots o_{| \mathcal{J} |}) \cdot 
  \\
  \cdot [\pi_1 (o_1), \ldots, \pi_{| \mathcal{J} |} (o_{| \mathcal{J} |})] .
  \end{split}
\end{equation}
We now consider relabelings of sources and observers.

\begin{definition}
  The {\em{group of observers}} relabelings $G_{\textup{observers}}$ is
  derived from the automorphism group of the directed bipartite graph
  $\mathbb{S}$, by considering its action on observers: $G_{\textup{obsevers}}
  \subset S_{| \mathcal{J} |}$.
\end{definition}

We write the action of $\sigma \in G_{\textup{observers}}$ on a probability
distribution:
\begin{equation}
  \label{Eq:RelabelingObservers} \sigma (\vec{P}) = \!\! \sum_{o_1 \ldots o_{|
  \mathcal{J} |}} \!\!\!\! P (o_1 \ldots o_{| \mathcal{J} |}) \cdot [o_{\sigma^{- 1}
  (1)} \ldots o_{\sigma^{- 1} (| \mathcal{J} |)}] .
\end{equation}
\begin{definition}
  A relabeling $g = (\sigma, \pi)$ in the scenario $\mathbb{S}$ is
  composed of a member of $G_{\textup{observers}}$ and $G_{\textup{outcomes}}$.
  The {\em{relabeling group}} $G = G_{\textup{observers}} \ltimes
  G_{\textup{outcomes}}$ is a semidirect product constructed using the natural
  action of $G_{\textup{observers}}$ on the elements of $G_{\textup{outcomes}}$.
\end{definition}

A relabeling $g \in G$ acts on probability distributions as $g (\vec{P}) =
(\sigma, \pi) (\vec{P}) = \sigma (\pi (\vec{P}))$. However, all we care about
in this paper is the enumeration of relabelings in $G$, and the exact
structure of $G$ is not relevant for that purpose.

Relabelings of probability distributions translate into relabelings of
patterns. The group $G$ is especially interesting when studying all possible
patterns of a scenario, as to group them.

\begin{definition}
  Under the action of $G$, a pattern $\vec{\mathsf{P}}$ correspond
  to an {\em{orbit}} $\vec{\mathsf{P}}^G$ of patterns with similar
  physical properties:
  \begin{equation}
    \vec{\mathsf{P}}^G = \{ g (\vec{\mathsf{P}}) : g \in
    G \} .
  \end{equation}
\end{definition}

As an example, the nonphysical pattern $(\oslash, \oslash, \ldots, \oslash)$
is the only member of its orbit.

To make the definitions above concrete, we turn to the two scenarios
considered in this paper.

\begin{figure}[t]
  \centering
  \includegraphics{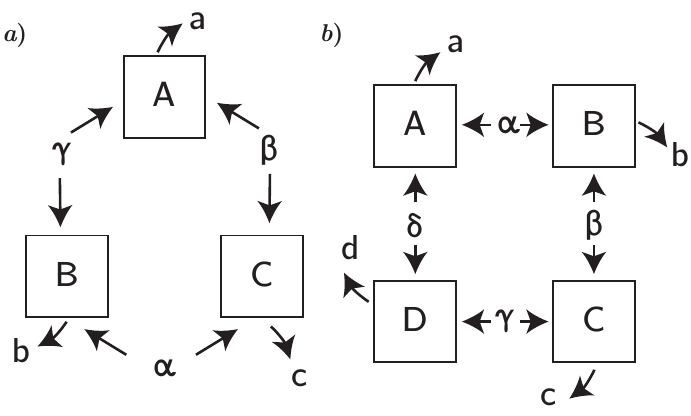}
  \caption{\label{Fig:Scenarios}Scenarios considered in this paper: a) the
  triangle scenario, b) the square scenario}
\end{figure}

\subsection{The triangle scenario}
\label{Sec:TriangleScenarioDef}

The triangle scenario involves three sources $\mathcal{S}= \{ \alpha, \beta,
\gamma \}$ and three observers $\mathcal{O}= \{ A, B, C \}$, with connections
$\mathcal{E}= \{ (\alpha, B), (\alpha, C), (\beta, A), (\beta, C), (\gamma,
A), (\gamma, B) \}$, as shown in Figure~\ref{Fig:Scenarios}a. We consider
binary outcomes $a, b, c \in \{ 0, 1 \}$. A probability distribution
$\vec{P}_{A B C} \in \mathbb{R}^8$ in this scenario has the coefficients:
\begin{equation}
\begin{split}
  \vec{P}_{A B C} = (P (000), P (001), P (010), P (011), \ldots \\
  P(100), P(101), P(110), P (111))^{\top} .
  \end{split}
\end{equation}
The symmetry group $G_{\textup{outcomes}}$ has 8 elements, and the automorphism
group $G_{\textup{observers}}$ is the symmetry group of the triangle, i.e. the
dihedral group of order 6: $D_6 = S_3$.

This scenario contains $255 = 2^{2 \cdot 2 \cdot 2} - 1$ patterns satisfying
normalization. Using relabelings, they are grouped into $21$ orbits.

\subsection{The square scenario}
\label{Sec:SquareScenarioDef}

The square scenario involves four sources $\mathcal{S}= \{ \alpha, \beta,
\gamma, \delta \}$ and four observers $\mathcal{O}= \{ A, B, C, D \}$, with
connections
\begin{equation}
\begin{split}
\mathcal{E}= \{ (\alpha, A), (\alpha, B), (\beta, B), (\beta, C), \\
(\gamma, C), (\gamma, D), (\delta, D), (\delta, A) \},
\end{split}
\end{equation}
as shown in
Figure~\ref{Fig:Scenarios}b. Again we consider binary outcomes $a, b, c, d \in
\{ 0, 1 \}$. We write a probability distribution $\vec{P}_{A B C D} \in
\mathbb{R}^{16}$. The symmetry group $G_{\textup{outcomes}}$ has $16$ elements,
and the automorphism group $G_{\textup{observers}}$ is the symmetry group of the
square, i.e. the dihedral group of order $8$, namely $D_8$.

The scenario contains $65535 = 2^{2 \cdot 2 \cdot 2 \cdot 2} - 1$ patterns
satisfying normalization. Under relabelings, they are grouped into
$804$ orbits.

\subsection{Correlation sets and decision problems}

Depending on the type of resources that the sources $\{ \mathcal{S}_i \}$
distribute, we may observe different sets of correlations. Among the sets of
interest, we distinguish three.

\begin{definition}
  The {\em{nonsignaling set}} $\mathcal{N} \subset \mathbb{R}^N$, containing
  probability distributions that do not enable signaling.
\end{definition}

\begin{definition}
  The {\em{quantum set}} $\mathcal{Q} \subseteq \mathcal{N} \subset
  \mathbb{R}^N$, containing probability distributions obtained when the
  sources distribute quantum states which are measured by the observers
  according to the axioms of quantum mechanics.
\end{definition}

\begin{definition}
  The {\em{local set}} $\mathcal{L} \subseteq \mathcal{Q} \subseteq
  \mathcal{N} \subset \mathbb{R}^N$, containing probability distribution
  obtained by processing classical information. There we identify sources with
  local hidden variables, and the processing follows from classical
  probability axioms.
\end{definition}

For each of these sets, we aim to answer the decision problem: is $\vec{P} \in
\mathcal{N}, \mathcal{Q}, \mathcal{L}$? When the answer is, e.g., $\vec{P} \in
\mathcal{L}$, we may want a {\em{model}} that shows explicitly how, e.g.,
$\vec{P}$ can be obtained using local hidden variables. When the answer is
negative (e.g. $\vec{P} \notin \mathcal{L}$), we may want a
{\em{certificate}}, i.e. a statement of non-membership that is nevertheless
easier to verify than solving the decision problem once again.

As the labeling of sources, observers and outcomes is arbitrary, for a given
relabeling $g \in G$, we have
\begin{equation}
  g (\vec{P}) \in \mathcal{N} \quad \Leftrightarrow \quad \vec{P} \in
  \mathcal{N},
\end{equation}
and the same for the sets $\mathcal{Q}$ and $\mathcal{L}$.

As we discuss in the next sections, those decision problems are hard and scale
badly with the number of sources, observers and outcomes. Solving those
problems on patterns can be much easier. We write $\mathsf{N}$, $\mathsf{Q}$,
$\mathsf{L}$ for the sets of patterns corresponding to $\mathcal{N},
\mathcal{Q}, \mathcal{L}$: the pattern $\vec{\mathsf{P}} \in \mathsf{N}$ if
there exists $\vec{P} \in \mathcal{N}$, and so on.

As the number of all patterns for a given scenario is finite, we can group
them according to the layers shown in Figure~\ref{Fig:PatternSets}. The
techniques and algorithms are described in Table~\ref{Tab:DecisionProblems}.

\begin{figure}[t]
  \centering
  \includegraphics{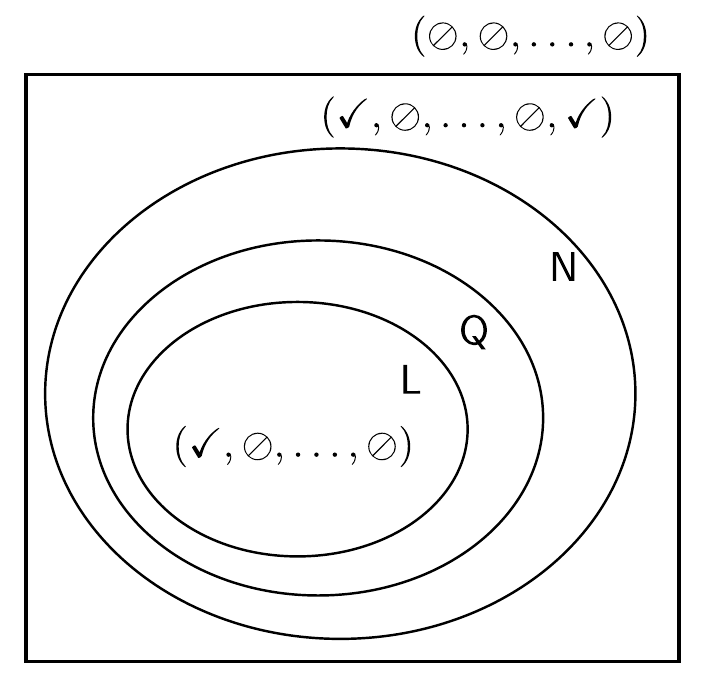}
  \caption{\label{Fig:PatternSets}Local ($\mathsf{L}$), quantum ($\mathsf{Q}$)
  and nonsignaling ($\mathsf{N}$) sets of patterns, with examples of patterns
  in the triangle scenario. The rectangle contains only patterns that satisfy
  the normalization condition. In the triangle scenario, the pattern $[000]$
  is obviously local, and the pattern $[000] + [111] \notin \mathsf{N}$.}
\end{figure}

\begin{table*}
  \begin{tabular}{|l|l|l|l|l|}
      \hline
      Decision problem & Exact algorithm & Relaxation: find a model &
      Relaxation: find imp. certificate\\
      \hline
      $\vec{P}_{\textup{test}} \in \mathcal{N}$ & ? & ? &
      Alg.~\ref{Alg:NSProbInflation}\\
      \hline
      $\vec{\mathsf{P}}_{\textup{test}} \in \mathsf{N}$ & ? & ? &
      Alg.~\ref{Alg:NSPatInflationSat} and~\ref{Alg:NSPatInflationBitVector}\\
      \hline
      $\vec{P}_{\textup{test}} \in \mathcal{Q}$ & ? & Nonconvex
      optimization~{\cite{Liang2007a}} & Inflation\&SDP {\cite{Wolfe2021a,Ligthart2021}}\\
      \hline
      $\vec{\mathsf{P}}_{\textup{test}} \in \mathsf{Q}$ & ? & ? & ?\\
      \hline
      $\vec{P}_{\textup{test}} \in \mathcal{L}$ & Alg.~\ref{Alg:LProbModel} &
      Same as left, red. cardinality & Alg.~\ref{Alg:LProbInflation}\\
      \hline
      $\vec{\mathsf{P}}_{\textup{test}} \in \mathsf{L}$ &
      Alg.~\ref{Alg:LPatModel}, Alg.~\ref{Alg:LocalTC} & Same as left, red. cardinality &
      Alg.~\ref{Alg:LPatInflation}\\
      \hline
  \end{tabular}
  \caption{
  \label{Tab:DecisionProblems}Decision problems and available
  techniques 
  }
\end{table*}

\subsection{Certificates}

We now turn to certificates. They describe in a compact manner that a set of
distributions or patterns is {\em{not}} in a correlation set. In our paper,
they are the result of an algorithm solving a decision problem, and
subsequently testing the validity of a certificate is easier than solving the
decision problem in the first place.

\begin{definition}
  \label{Def:Valuation}Following~{\cite{Wolfe2019}}, a {\em{valuation}} $V$
  is an assignment of outcomes to a subset of parties of a scenario.
\end{definition}

For example, in the triangle scenario $\{ a = 0, c = 1 \}$ is a valuation on
the observers A and C. It corresponds to the probability $P (V) =
P_{\textup{AC}} (01)$.
For a given scenario $\mathbb{S}$, we consider the set $\bm{V}$ of all
valuations. Let $V_1, \ldots, V_m \in \bm{V}$ be arbitrary valuations.

\begin{definition}
  \label{Def:Monomial}A {\em{monomial}} $M = P (V_1) \ldots P (V_m)$ of
  degree $\deg M = m$ is defined by a sequence $(V_1, \ldots, V_m)$ of $m$
  valuations. For a given scenario $\mathbb{S}$, we write $\bm{M}$ the
  set of all its monomials.
\end{definition}

For example, the monomial $P_{\textup{ABC}} (000)$ has degree 1 while
$P_{\textup{A}} (0) P_{\textup{B}} (0)$ has degree 2.

\begin{definition}
  A {\em{certificate}} $I$ for the correlation set $\mathcal{X} \in \{
  \mathcal{L}, \mathcal{Q}, \mathcal{N} \}$ is a polynomial $I \in \mathbb{R}
  \bm{M}$ (a formal sum over $\bm{M}$) with the condition that
  \begin{equation}
    \label{Eq:Certificate} \vec{P} \in \mathcal{X} \quad \Rightarrow \quad I
    (\vec{P}) \geqslant 0.
  \end{equation}
\end{definition}

The degree of a polynomial $I \in \mathbb{R} \bm{M}$ is given by the
maximal degree of all its monomials.

\section{Nonsignaling correlations}
\label{Sec:NonsignalingCorrelations}
\begin{figure}[t]
  \centering
  \includegraphics{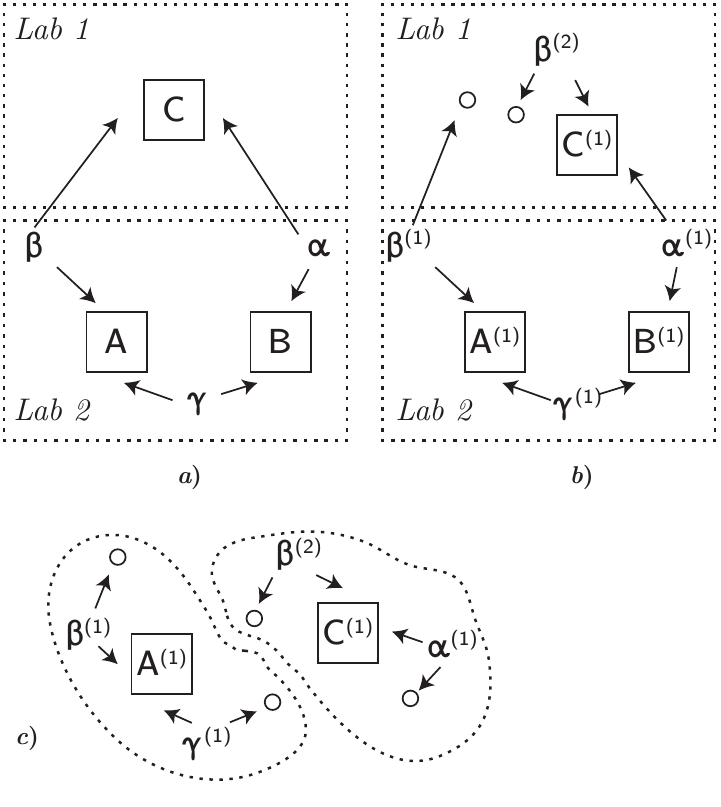}
  \caption{\label{Fig:Nonsignaling}Construction of nonsignaling condition: a)
  the original triangle scenario, b) a variant of the scenario involving a
  copy of the source $\gamma$, c) a variant of the scenario involving a copy
  of the source $\gamma$ and removing $B$.}
\end{figure}

In~\eqref{Eq:Normalization} and in~\eqref{Eq:PossibiliticNormalization}, we
asserted that correlations must obey the normalization constraint. However,
there are other constraints that come into play. We start with an example.

\subsection{Nonsignaling constraints}\label{Sec:PossibilisticNS}

Let us look at the triangle scenario in Figure~\ref{Fig:Nonsignaling}a. In
Figure~\ref{Fig:Nonsignaling}b, we construct a variant of the scenario by
duplicating the source $\beta$; we also added indices to sources and
observers to distinguish them from the original scenario.

In both scenarios, if we examine the probability distributions $\vec{P}_{A B}$
and $\vec{P}_{A^{(1)} B^{(1)}}$, these distributions must be identical due to
the no signaling principle, as the devices present in the second lab are
identical. Indeed, if $\vec{P}_{A B} \neq \vec{P}_{A^{(1)} B^{(1)}}$, then the
first lab could signal to the second lab by changing the wiring.
We thus have $\vec{P}_{A B} = \vec{P}_{A^{(1)} B^{(1)}}$. By changing the lab
boundaries, we also get $\vec{P}_{B C} = \vec{P}_{B^{(1)} C^{(1)}}$.
Finally, we observe that removing $B^{(1)}$ leads to the scenario in
Figure~\ref{Fig:Nonsignaling}c, where we must have $P_{A^{(1)} C^{(1)}} (a c)
= P_{A^{(1)}} (a) P_{C^{(1)}} (c)$.

We now consider the pattern $\vec{\mathsf{P}}_{A B C}^{\textup{GHZ}} = [000] +
[111] = (\checked, \oslash, \oslash, \oslash, \oslash, \oslash, \oslash,
\checked)^{\top}$, slightly abusing the notation~\eqref{Eq:Euclidean}. If that
pattern obeys the nonsignaling principle, there must be a pattern
$\vec{\mathsf{P}}_{A^{(1)} B^{(1)} C^{(1)}}$ such that
\begin{align}
  \vec{\mathsf{P}}_{A^{(1)} B^{(1)} } &= \vec{\mathsf{P}}_{A B}, \nonumber \\
  \vec{\mathsf{P}}_{B^{(1)} C^{(1)} } &= \vec{\mathsf{P}}_{B C}, \nonumber \\
  \mathsf{P}_{A^{(1)} C^{(1)}} (a c) &= \mathsf{P}_A (a) \mathsf{P}_C (c) .
\end{align}
We have thus:
\begin{equation}
  \label{Eq:Cut1} \vec{\mathsf{P}}_{A^{(1)} B^{(1)}} =
  \vec{\mathsf{P}}_{B^{(1)} C^{(1)} } = (\checked, \oslash, \oslash, \checked)
  .
\end{equation}
We also have $P_A (a) = P_C (c) = \checked$ and thus:
\begin{equation}
  \label{Eq:Cut2} \vec{\mathsf{P}}_{A^{(1)} C^{(1)} } = (\checked, \checked,
  \checked, \checked)
\end{equation}
Using the statements~\eqref{Eq:SumToNope} and~\eqref{Eq:SumToOK}, we have:
\begin{equation*}
  \label{Eq:GHZContradiction} \begin{array}{|l|l|l|l|l|}
    \hline
    a & b & c & \mathsf{P}_{A_1 B_1 C_1} \textup{ of } \eqref{Eq:Cut1} &
    \mathsf{P}_{A_1 B_1 C_1} \textup{ of } \eqref{Eq:Cut2}\\
    \hline
    0 & 0 & 0 & \checked & \\
    \hline
    0 & 0 & 1 & \oslash & \chi_1\\
    \hline
    0 & 1 & 0 & \oslash & \\
    \hline
    0 & 1 & 1 & \oslash & \chi_2\\
    \hline
    1 & 0 & 0 & \oslash & \omega_1\\
    \hline
    1 & 0 & 1 & \oslash & \\
    \hline
    1 & 1 & 0 & \oslash & \omega_2\\
    \hline
    1 & 1 & 1 & \checked & \\
    \hline
  \end{array} \quad 
\end{equation*}
where $(\chi_1 = \checked) \vee (\chi_2 = \checked)$ and
$(\omega_1 = \checked) \vee (\omega_2 = \checked)$,
which leads to a contradiction.

We say that the pattern $\vec{\mathsf{P}}_{A B
C}^{\textup{GHZ}} = [000] + [111]$ is {\em{signaling-enabling}}; if we had a
scenario with those correlations, the modification of this scenario as in
Figure~\ref{Fig:Nonsignaling}b would enable signaling. This statement means
that all realizations of $\vec{\mathsf{P}}_{A B C}^{\textup{GHZ}}$ in the sense
of~\eqref{Eq:Realizations}, written
\begin{equation}
  \label{Eq:PGHZ} \vec{P}_{A B C}^{\textup{GHZ}} = (x, 0, 0, 0, 0, 0, 0, 1 - x),
\end{equation}
are also signaling-enabling for $0 < x < 1$. Finally, we note that the orbit of
$\vec{\mathsf{P}}^{\textup{GHZ}}$ is also {\em{signaling-enabling}}:
\begin{equation}
  \label{Eq:PatGHZOrbit}
  \begin{split}
  (\vec{\mathsf{P}}^{\textup{GHZ}})^G = \{ [000] + [111],
  [001] + [110], \\
  [010] + [101], [011] + [100] \} .
  \end{split}
\end{equation}

\subsection{Inflations and nonsignaling inflations}

In the Figure~\ref{Fig:Nonsignaling} we used in the previous section, we
presented an example of a modification of the original scenario. Such a
modification is called an {\em{inflation}}, following the process proposed
by~{\cite{Wolfe2019}}.

\begin{definition}
  An {\em{inflation}} contains copies of the original sources and observers,
  wiring them according to the template of the original scenario. In the
  inflation $\mathbb{S}' = (\mathcal{S}', \mathcal{O}', \mathcal{E}')$, we
  write with $\mathcal{S}' = \{ \mathcal{S}^{(k)}_i \}$ and $\mathcal{O}' = \{
  \mathcal{O}_j^{ (l)} \}$.
  We wire them according to the following rules.
  \begin{itemize}
  \item Compatible types:
  \begin{equation}
    \label{Eq:CompatibleTypes}(\mathcal{S}_i^{(k)},
    \mathcal{O}_j^{(l)}) \in \mathcal{E}' \Rightarrow (\mathcal{S}_i,
    \mathcal{O}_j) \in \mathcal{E},
  \end{equation}
  \item Parties receive a complete set of
    sources:
  \begin{equation}
    \label{Eq:CompleteSources} (\mathcal{S}_i, \mathcal{O}_j) \in \mathcal{E} \Rightarrow
    \forall l, \exists k, (\mathcal{S}_i^{(k)}, \mathcal{O}_j^{(l)}) \in
    \mathcal{E}',
  \end{equation}
  \end{itemize}
\end{definition}

Additionally, if the inflation is {\em{nonsignaling}}, we impose the
additional rule:
\begin{itemize}
    \item No duplication of sources:
\begin{equation}
  \label{Eq:NoDuplication} \{
  (\mathcal{S}_i^{(k)}, \mathcal{O}_j^{(l)}), (\mathcal{S}_i^{(k)},
  \mathcal{O}_j^{(l')}) \} \subset \mathcal{E}' \Rightarrow l = l' .
\end{equation}
\end{itemize}
Such inflations are referred to as non-fanout \cite{Wolfe2019}. In the triangle scenario, all the inflations have the shape of a ring, as
outlined in Figure~\ref{Fig:TriangleRings}. The complete family of
nonsignaling constraints is thus a sequence of rings containing $3 n$
observers for $n = 2, 3, \ldots$. The same reasoning applies to the square
scenario, where the complete family of nonsignaling constraints is expressed
by a sequence of rings of length $4 n$ for $n = 2, 3, \ldots$.

\begin{figure}[t]
  \centering
  \includegraphics{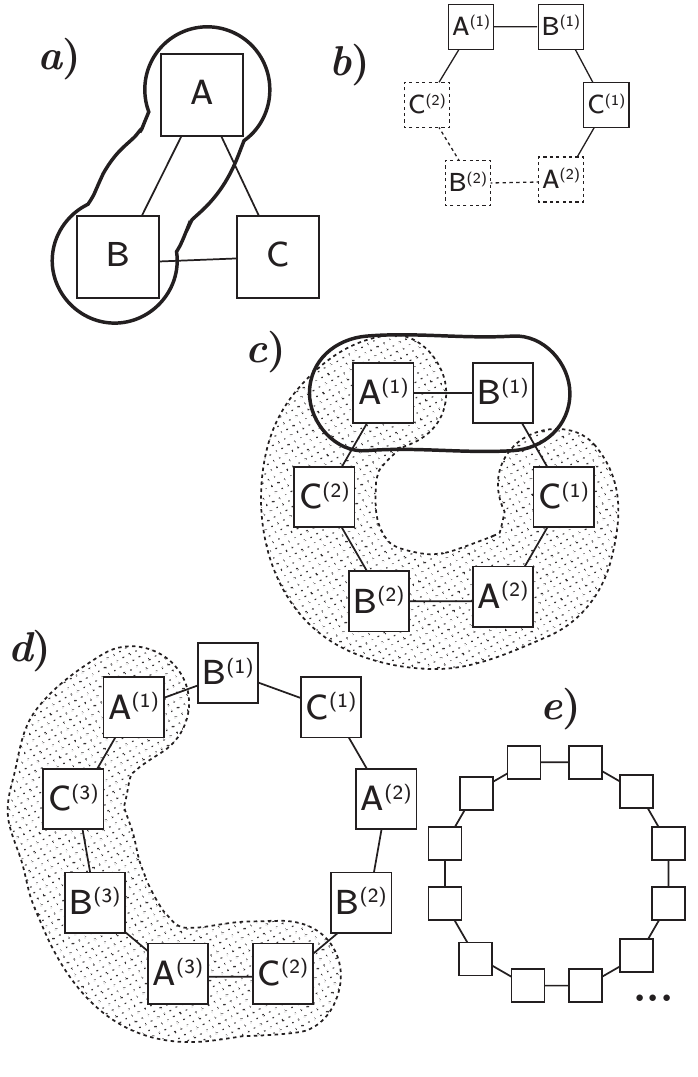}
  \caption{\label{Fig:TriangleRings}Complete family of inflations for the
  triangle scenario, where the sources vertices are omitted for readability.
  a) the original scenario, b) the $6$-ring contains the inflation of
  Figure~\ref{Fig:Nonsignaling}b as a subgraph, c) the 6-ring with subgraphs
  highlighted, in black a subgraph compatible with the original scenario, d)
  the 9-ring with a highlighted subgraph matching a subgraph of the 6-ring but
  not of the original scenario, e) a sketch of the 12-ring.}
\end{figure}

\subsection{AI-expressible subsets}

We consider subsets $\mathcal{K}' \subseteq \mathcal{J}'$ of observers in an
inflation $\mathbb{S}'$. When $\mathcal{K}' = \{ \mathcal{O}_j^{(\ell)} \}$ is
a singleton, the marginal probability distribution of the inflation scenario
matches the marginal probability distribution $\vec{P}$ of the original
scenario:
\begin{equation}
  \vec{P}^{\textup{inflation}}_{\mathcal{O}^{(\ell)}_j} =
  \vec{P}_{\mathcal{O}_j},
\end{equation}
This works as well when $\mathcal{K}' = \{ \mathcal{O}_j^{ (\ell)},
\mathcal{O}_{j'}^{ (\ell')} \}$ and one of the following holds true.
\begin{itemize}
  \item There are sources connected to both $\mathcal{O}_j^{(\ell)}$ and
  $\mathcal{O}_{j'}^{(\ell')}$, but $j \neq j'$. We consider the subgraph of
  $\mathbb{S}'$ after removal of the observers in $\mathcal{J}'
  \backslash\mathcal{K}'$ and of orphan sources (sources not connected to any
  observer). This subgraph matches the subgraph of the original scenario
  $\mathbb{S}$ after removal of the observers in $\mathcal{J}\backslash \{
  \mathcal{O}_j, \mathcal{O}_{j'} \}$ and orphan sources. Thus
  \begin{equation}
    \vec{P}^{\textup{inflation}}_{\mathcal{O}_j^{ (\ell)}
    \mathcal{O}_{j'}^{(\ell')}} (x y) = \vec{P}_{\mathcal{O}_j
    \mathcal{O}_{j'}} (x y),
  \end{equation}
  where $x, y$ are enumerated over the possible outcomes. This corresponds to
  the cut inflation in Figure~\ref{Fig:Nonsignaling}b, where
  $\vec{P}_{\mathcal{A}^{(1)} \mathcal{B}^{(1)}} (a b) =
  \vec{P}_{\mathcal{A}\mathcal{B}} (a b) .$
  
  \item There is no source $\mathcal{S}_i^k$ connected to both
  $\mathcal{O}_j^{(\ell)}$ and $\mathcal{O}_{j'}^{(\ell')}$. Then
  \begin{align}
    \vec{P}^{\textup{inflation}}_{\mathcal{O}_j^{ (\ell)}
    \mathcal{O}_{j'}^{(\ell')}} (x y) & =
    \vec{P}^{\textup{inflation}}_{\mathcal{O}_j^{ (\ell)}} (x)
    \vec{P}^{\textup{inflation}}_{\mathcal{O}_{j'}^{(\ell')}} (y) \nonumber \\
    & =
    \vec{P}_{\mathcal{O}_j} (x) \vec{P}_{\mathcal{O}_{j'}} (y),
    \end{align}
  again over all possible $x, y$. This corresponds to
  Figure~\ref{Fig:Nonsignaling}c where $\vec{P}_{\mathcal{A}^{(1)}
  \mathcal{C}^{(1)}} (a c) = \vec{P}_{\mathcal{A}} (a) \vec{P}_{\mathcal{C}}
  (c)$
\end{itemize}
Such sets $\mathcal{K}'$ are called {\em{ai-expressible}},
following~{\cite{Wolfe2019}}. The definition generalizes to any cardinality of
$\mathcal{K}'$.

\begin{definition}
  Given an inflation $\mathbb{S}'$ of a scenario $\mathbb{S}$, a subset
  $\mathcal{K}' \subseteq \mathcal{J}'$ of the observers is
  {\em{AI-expressible}} when the following holds: after removing the
  observers in $\mathcal{J}' \backslash\mathcal{K}'$ and orphan sources, the
  connected components of the resulting graph each match a subgraph of
  $\mathbb{S}$.
\end{definition}

For an AI-expressible set $\mathcal{K}'$, we $\mathcal{K}' =\mathcal{H}'_1
\cup \mathcal{H}'_2 \cup \cdots \cup \mathcal{H}'_{\left|
\bm{\mathcal{H}} \right|}$ as a disjoint union of observers for each of
its connected components. For each $\mathcal{H}'_h$, we write $\mathcal{H}_h$
the corresponding observers in the original scenario, after dropping the copy
index. Then we identify:
\begin{align}
  \label{Eq:Identify} 
  P_{\mathcal{K}'} \left( \bm{k} \right) &=
  P_{\mathcal{H}'_1 \ldots \mathcal{H}'_{\left| \bm{\mathcal{H}}
  \right|}} \left( \bm{h}_1 \ldots \bm{h}_{\left|
  \bm{\mathcal{H}} \right|} \right) \nonumber \\ 
  & = P_{\mathcal{H}'_1} \left(
  \bm{h}_1 \right) \ldots P_{\mathcal{H}'_{\left| \bm{\mathcal{H}}
  \right|}} \left( \bm{h}_{\left| \bm{\mathcal{H}} \right|}
  \right) \nonumber \\
  & = P_{\mathcal{H}_1} \left( \bm{h}_1 \right) \ldots
  P_{\mathcal{H}_{\left| \bm{\mathcal{H}} \right|}} \left(
  \bm{h}_{\left| \bm{\mathcal{H}} \right|} \right),
\end{align}
and remark that the r.h.s. of~\eqref{Eq:Identify} is a member of the monomials
$\bm{M}$ according to Definition~\ref{Def:Monomial}.

\subsection{Algorithms to compute with nonsignaling
constraints}\label{Sec:AlgorithmsNS}

We do not know of a technique that can test if a particular $\vec{P}_{A B C}$
is nonsignaling or signaling-enabling.

There is a relaxation that uses linear programming (LP), described
in~{\cite{Wolfe2019}}. It is a relaxation because we need to impose a cutoff
on the complexity of the inflation (in the triangle, that would be the ring
length); also because some nonlinear independence conditions cannot be
expressed using LP. Given a distribution $\vec{P}_{A B C}$ and a cutoff on the
complexity of the inflation, we can get two results: the distribution is
incompatible with this limited set of nonsignaling constraints, or the result
is inconclusive.

\begin{algorithm}
      \label{Alg:NSProbInflation}From {\cite{Wolfe2019}}. Decide on an
      inflation of the original scenario. Identify maximal AI-expressible sets
      of observers. Write all constraints between marginals of
      $\vec{P}_{\textup{inflation}}$ and the explicit values of the
      corresponding monomials in $\bm{M}$ in the form
      of~\eqref{Eq:Identify}. When the linear program is infeasible, we can
      extract a polynomial inequality $I \in \mathbb{R} \bm{M}$ in the
      form~\eqref{Eq:Certificate} that certifies that $\vec{P}_{\textup{test}}
      \notin \mathcal{N}$, along with a decomposition that allows the
      verification of the certificate without solving the linear program
      again.
\end{algorithm}

This relaxation quickly becomes expensive as the complexity of the inflation
grows. However, if we study patterns instead of probability distributions, the
problem simplifies. Indeed, the semiring $(\mathbb{P}, +, \cdot)$ is
isomorphic to the Boolean algebra $(\{ 0, 1 \}, \vee, \wedge)$: the
resulting Boolean satisfiability problem can be solved efficiently using SAT
solvers.

\begin{algorithm}
      \label{Alg:NSPatInflationSat}Similar as
      Algorithm~\ref{Alg:NSProbInflation}, but the equality constraints are
      written between marginals of $\vec{\mathsf{P}}_{\textup{inflation}} \in
      \mathbb{P}^{N'}$ and $\vec{\mathsf{P}}_{\textup{test}} \in \mathbb{P}^N$.
      It is also possible to incorporate independence conditions of the form
      $\mathsf{P}_{A^{(1)} C^{(1)}} (a_1 c_1) = \mathsf{P}_{A^{(1)}} (a_1)
      \mathsf{P}_{C^{(1)}} (c_1)$. The resulting problem is solved by a SAT
      solver. Some SAT solvers support the extraction of a {\em{proof}} or
      infeasibility certificate; but this is an active domain of
      research~{\cite{Baek2021}}.
\end{algorithm}

However, there is a simpler algorithm if we restrict ourselves to equality
constraints where the right-hand side is deduced directly from the pattern
under test.

\begin{algorithm}
      \label{Alg:NSPatInflationBitVector}There is an algorithm that solves a
      relaxation of the $\vec{\mathsf{P}}_{\textup{test}} \in
      \mathsf{N}$ decision problem in $\mathcal{O} (m \cdot N')$ time and
      $\mathcal{O} (N')$ space, if $m$ is the number of equality constraints
      and $N'$ the number of elements in $\vec{\mathsf{P}}_{\textup{inflation}}
      \in \mathbb{P}^{N'}$.
      
      The general principle is to identify all coefficients $= \oslash$ in
      $\vec{\mathsf{P}}_{\textup{inflation}}$ using~\eqref{Eq:SumToNope}, then
      look for a contradiction using a constraint $= \checked$
      using~\eqref{Eq:SumToOK}.
\end{algorithm}

We discuss Algorithm~\ref{Alg:NSPatInflationBitVector} using the example of
Section~\ref{Sec:PossibilisticNS}, where $\vec{\mathsf{P}}_{\textup{inflation}}
= \vec{\mathsf{P}}_{A^{(1)} B^{(1)} C^{(1)}}$ and
$\vec{\mathsf{P}}_{\textup{test}} = (\checked, \oslash, \oslash, \oslash,
\oslash, \oslash, \oslash, \checked)^{\top}$. We sort constraints according to
the right-hand side:
\begin{align}
  \mathsf{P}_{A^{(1)} B^{(1)}} (00) = \mathsf{P}_{A^{(1)} B^{(1)}} (11) &= \checked \nonumber \\
  \mathsf{P}_{B^{(1)} C^{(1)}} (00) = \mathsf{P}_{B^{(1)} C^{(1)}} (11) &= \checked \nonumber \\
  \mathsf{P}_{A^{(1)} C^{(1)}} (00) = \mathsf{P}_{A^{(1)} C^{(1)}} (01) &= \checked \nonumber \\
  \mathsf{P}_{A^{(1)} C^{(1)}} (10) = \mathsf{P}_{A^{(1)} C^{(1)}} (11) &= \checked 
\end{align}
and
\begin{align}
  \label{Eq:CutNopeRHS}
  \mathsf{P}_{A^{(1)} B^{(1)}} (01) = \mathsf{P}_{A^{(1)} B^{(1)}} (10) &= \oslash \nonumber \\
  \mathsf{P}_{B^{(1)} C^{(1)}} (01) = \mathsf{P}_{B^{(1)} C^{(1)}} (10) &= \oslash .
\end{align}
Let the hypothesis set $\mathbb{H} = \{ \oslash, ? \}$ where $\oslash$ means
the same as in $\mathbb{P}$ and $?$ is a placeholder for either $\oslash$ or
$\checked$. We initialize the hypothesis vector
$\vec{\mathsf{H}}_{\textup{inflation}} \in \mathbb{H}^{N'}$ as
$\vec{\mathsf{H}}_{\textup{inflation}} := (?, \ldots ., ?)^{\top}$. Then,
for each constraint with a right-hand side $= \oslash$, we identify the
elements of $\vec{\mathsf{H}}_{\textup{inflation}}$ that correspond to the sum
making up to the left-hand side. As $x + y + \ldots = \oslash$ implies $x = y
= \cdots = \oslash$, we set all these elements to $\oslash$.

For example, $\mathsf{P}_{A^{(1)} B^{(1)}} (01) = \oslash$ implies that
$\mathsf{H}_{\textup{inflation}} (010) = \mathsf{H}_{\textup{inflation}} (011) =
\oslash$. After running through all the constraints in~\eqref{Eq:CutNopeRHS},
we get
\begin{equation}
  \vec{\mathsf{H}}_{\textup{inflation}} = (?, \oslash, \oslash, \oslash,
  \oslash, \oslash, \oslash, ?)^{\top} .
\end{equation}
Then, we scan the constraints with right-hand side $= \checked$. For each
constraint, we identify the elements of $\vec{\mathsf{H}}_{\textup{inflation}}$
that correspond to the sum making up the left-hand side. If all these elements
$= \oslash$, we get the contradiction we are looking for. For example
\begin{multline}
\mathsf{P}_{A^{(1)} C^{(1)}} (01) = \mathsf{P}_{A^{(1)} B^{(1)} C^{(1)}}(001) + \\
 + \mathsf{P}_{A^{(1)} B^{(1)} C^{(1)}} (011) = \checked
\end{multline}
whereas
\begin{equation}
\mathsf{H}_{A^{(1)} B^{(1)} C^{(1)}} (001) = \mathsf{H}_{A^{(1)} B^{(1)}
C^{(1)}} (011) = \oslash.
\end{equation}

\subsection{Possibilistic certificates}

Such contradictions give rise to a special type of certificate.

\begin{definition}
  A {\em{possibilistic certificate}} is written:
  \begin{equation}
    \label{Eq:PossibilisticCertificate} T \Rightarrow E_1 \vee E_2 \vee \ldots
    \vee E_m
  \end{equation}
  where each of $T$, $E_1$, \ldots, $E_m$ correspond to a
  {\em{valuation}} of the inflation scenario (see
  Definition~\ref{Def:Valuation}) whose set of observers is AI-expressible.
  The valuation $T$ is the {\em{antecedent}} and the valuations $\{ E_i \}$
  are the {\em{consequents}}. The statement means that if $T$ happens, then
  at least one of the $\{ E_i \}$ must happen.
\end{definition}

We now describe how to extract a certificate from such a contradiction.

\begin{algorithm}
      \label{Alg:ExtractCertificate}When
      Algorithm~\ref{Alg:NSPatInflationBitVector} succeeds, we can extract a
      polynomial inequality that can discriminate probability distributions
      that are close to the one of the pattern, but do not necessarily match a
      realization of the pattern.
\end{algorithm}

We describe the algorithm below. From a possibilistic
certificate~\eqref{Eq:Certificate}, we deduce, using the union bound:
\begin{align}
  P (T) & \leqslant P (E_1 \vee \ldots \vee E_m) \nonumber \\
        & \leqslant P (E_1) + \cdots + P(E_m),
\end{align}
thus, trivially, $\mathsf{P} (T) \leqslant \mathsf{P} (E_1) + \cdots +
\mathsf{P} (E_m)$ at the pattern level. As the antecedent and consequents are
AI-expressible, they all correspond to elements of $\bm{M}$ that can be
computed from the probability distribution of the original scenario.

As an example, we write the following certificate for the cut inflation of the
triangle scenario: its validity is proven later.
\begin{multline}
  \underset{T}{\underbrace{(a^{(1)} = 0, c^{(1)} = 1)}} \quad \Rightarrow \\
  \underset{E_1}{\underbrace{(a^{(1)} = 0, b^{(1)} = 1)}} \vee
  \underset{E_2}{\underbrace{(b^{(1)} = 0, c^{(1)} = 1)}} .
\end{multline}
We get:
\begin{equation}
  \label{Eq:PreCutIneq} P_{A^{(1)} C^{(1)}} (01) \leqslant P_{A^{(1)} B^{(1)}}
  (01)  + P_{B^{(1)} C^{(1)}} (01) .
\end{equation}
Relating the inflation distribution with the original distribution, we have:
\begin{align}
  \label{Eq:CutIneq} 
  P_A (0) P_C (1) \leqslant & P_{A B} (01) + P_{B C} (01), \\
\mathsf{P}_A (0) \mathsf{P}_C (1) \leqslant & \mathsf{P}_{A B} (01) +
  \mathsf{P}_{B C} (01),
\end{align}
which is violated by GHZ-like distributions~\eqref{Eq:PGHZ} both at the
probability and possibilistic level.

How do we find such sets? One way is to post-process the result of the
algorithm presented in Section~\ref{Sec:AlgorithmsNS}.

Every valuation corresponds to a subset of indices in
$\vec{\mathsf{P}}_{\textup{inflation}}$; we write the corresponding sets
$\mathsf{T}$, $\mathsf{E}_1$, \ldots, $\mathsf{E}_M$; when $\mathsf{T}
\subseteq \mathsf{E}_1 \cup \ldots \cup \mathsf{E}_M$ the validity of the
certificate easily follows. As a result of the algorithm, we saw that the
constraint $\mathsf{P}_{A^{(1)} C^{(1)}} (01) = \checked$ leads to a
contradiction; this constraint provides the antecedent $\mathsf{T} = \{ 001,
011 \}$.

Where does this contradiction actually comes from? We present in
Table~\ref{Tab:Investigation} the indices in $\mathsf{T}$ and those of all the
constraints that set elements to $\oslash$. Here, we naturally take
$\mathsf{E}_1 = \{ 010, 011 \}$, $\mathsf{E}_2 = \{ 001, 101 \}$.

\begin{table*}
    $\begin{array}{|l|l|l|l|l|l|l|l|}
      \hline
      a & b & c & \mathsf{P}_{A^{(1)} C^{(1)}} (01) = \checked &
      \mathsf{P}_{A^{(1)} B^{(1)}} (01) = \oslash & \mathsf{P}_{A^{(1)}
      B^{(1)}} (10) = \oslash & \mathsf{P}_{B^{(1)} C^{(1)}} (01) = \oslash &
      \mathsf{P}_{B^{(1)} C^{(1)}} (10) = \oslash\\
      \hline
      0 & 0 & 0 &  &  &  &  & \\
      \hline
      0 & 0 & 1 & \mathsf{T} &  &  & \oslash & \\
      \hline
      0 & 1 & 0 &  & \oslash &  &  & \oslash\\
      \hline
      0 & 1 & 1 & \mathsf{T} & \oslash &  &  & \\
      \hline
      1 & 0 & 0 &  &  & \oslash &  & \\
      \hline
      1 & 0 & 1 &  &  & \oslash & \oslash & \\
      \hline
      1 & 1 & 0 &  &  &  &  & \oslash\\
      \hline
      1 & 1 & 1 &  &  &  &  & \\
      \hline
    \end{array}$
  \caption{\label{Tab:Investigation}Investigation of a contradiction arising
  from the possibilistic algorithm of Section~\ref{Sec:AlgorithmsNS}.}
\end{table*}

To strengthen the resulting inequality, we want to minimize the number of sets
$\mathsf{E}_1, \ldots, \mathsf{E}_M$ that cover $\mathsf{T}$; this corresponds
to the well known set cover problem~{\cite{Korte2012}}.

\section{Local correlations}
\label{Sec:LocalCorrelations}
We now turn our attention to local correlations, which are obtained when the
sources are modeled by local hidden variables. Each source $\mathcal{S}_i$ is
associated with a random variable $s_i \in \Omega_i$. Without loss of
generality~{\cite{Rosset2018}}, we assume the sample space $\Omega_i$ to be
finite, so that $s_i$ has probability distribution $P_{s_i} (s_i)$.

In turn, we describe each observer using a probability distribution
$P_{\mathcal{O}_j | s_{i_{j, 1}} \ldots s_{i_{j, n_j}}} \left( o_j | s_{i_{j,
1}} \ldots s_{i_{j, n_j}} \right)$ where $i \in \{ i_{j, 1}, \ldots, i_{j,
n_j} \}$ if and only if $(\mathcal{S}_i, \mathcal{O}_j) \in \mathcal{E}$.

The resulting correlations are:
\begin{multline}
  \label{Eq:LocalProbabilityDistribution} P (o_1 \ldots o_{| \mathcal{J} |}) =
  \sum_{s_1, \ldots, s_{| \mathcal{I} |}} P_{s_1} (s_1) \cdots P_{s_{|
  \mathcal{I} |}} (s_{| \mathcal{I} |}) \cdot \\
  \cdot \prod_j P_{\mathcal{O}_j |
  s_{i_{j, 1}} \ldots s_{i_{j, n_j}}} \left( o_j | s_{i_{j, 1}} \ldots
  s_{i_{j, n_j}} \right) .
\end{multline}
Without loss of generality, we can assume $P_{s_i} (s_i) > 0$ for all $i$ and
$s_i$; otherwise we can simply reduce the cardinality of the corresponding
sources by removing the values that never happen.

\begin{algorithm}
      \label{Alg:LProbModel}As explicited in~{\cite{Rosset2018}}, the decision
      problem $\vec{P}_{\textup{test}} \in \mathcal{L}$ reduces to a polynomial
      feasibility problem.
\end{algorithm}
    
Consider the triangle scenario. If $\alpha, \beta, \gamma$ are classical
random variables taken from the distributions $P_{\alpha} (\alpha)$,
$P_{\beta} (\beta)$, $P_{\gamma} (\gamma)$, processed by the nodes A,B,C
according to the probability distributions $P_{\textup{A}} (a | \beta \gamma)$,
$P_{\textup{B}} (b | \gamma \alpha)$, $P_{\textup{C}} (c | \gamma \alpha)$, then
the observed outcomes are
\begin{multline}
  \label{Eq:PolyFeasibilityTriangle} P (a b c) = \sum_{\alpha \beta \gamma}
  P_{\alpha} (\alpha) P_{\beta} (\beta) P_{\gamma} (\gamma) \cdot \\
  \cdot P_{\textup{A}} (a |
  \beta \gamma) P_{\textup{B}} (b | \gamma \alpha) P_{\textup{C}} (c | \alpha
  \beta) .
\end{multline}
We can always assume that $\alpha, \beta, \gamma$ are taken from finite sets,
with $| \Omega_{\alpha} |, | \Omega_{\beta} |, | \Omega_{\gamma} | \leqslant
6$, according to~{\cite{Rosset2018}}. The decision problem $\vec{P} \in
\mathcal{L}$ reduces to a feasibility problem including $3 \cdot 5$ degrees of
freedom for the distribution of the classical random variables, and $3 \cdot
36$ degrees of freedom for the nodes processing distributions $P_{\textup{A}},
P_{\textup{B}}, P_{\textup{C}}$. Such nonconvex and highly nonlinear feasibility
problems are usually difficult to handle on their own.

However, looking for a model at the level of patterns is easier.

\begin{algorithm}
  \label{Alg:LPatModel}
  To find whether a pattern $\vec{\mathsf{P}}_{\textup{test}}$ has a model, we collapse the
  model elements to possibilistic distributions. We apply the morphism
  $\varphi$ of Section~\ref{Sec:PossibilitiesAndPatterns}
  to~\eqref{Eq:LocalProbabilityDistribution} and get:

\begin{align}
\label{Eq:PossibilisticLocalSAT}
& \mathsf{P} (o_1 \ldots o_{| \mathcal{J} |}) \nonumber \\
= &  \sum_{s_1, \ldots,
s_{| \mathcal{I} |}} \mathsf{P}_{s_1} (s_1) \cdots \mathsf{P}_{s_{|
 \mathcal{I} |}} (s_{| \mathcal{I} |}) \cdot  \nonumber \\
 & \cdot \prod_j
 \mathsf{P}_{\mathcal{O}_j | s_{i_{j, 1}} \ldots s_{i_{j, n_j}}}
 \left( o_j | s_{i_{j, 1}} \ldots s_{i_{j, n_j}} \right) \nonumber \\
 = &  \sum_{\bm{s}} \prod_j \mathsf{P}_{\mathcal{O}_j |
 s_{i_{j, 1}} \ldots s_{i_{j, n_j}}} \left( o_j | s_{i_{j, 1}} \ldots
 s_{i_{j, n_j}} \right) .
\end{align}
This feasibility problem can be solved using a SAT solver.
\end{algorithm}
                                 
In the triangle scenario, those are $\mathsf{P}_{\alpha}, \mathsf{P}_{\beta},
\mathsf{P}_{\gamma}, \mathsf{P}_{\textup{A}}, \mathsf{P}_{\textup{B}},
\mathsf{P}_{\textup{C}}$. Without loss of generality, we have
$\mathsf{P}_{\alpha} (\alpha) = \mathsf{P}_{\beta} (\beta) =
\mathsf{P}_{\gamma} (\gamma) = \checked$, and such the distribution of the
sources can be omitted from the model. The
equation~\eqref{Eq:PolyFeasibilityTriangle} reduces to:
\begin{equation}
  \label{Eq:PossibilisticLocalEquation} \mathsf{P} (a b c) = \sum_{\alpha
  \beta \gamma} \mathsf{P}_{\textup{A}} (a | \beta \gamma) \mathsf{P}_{\textup{B}}
  (b | \gamma \alpha) \mathsf{P}_{\textup{C}} (c | \alpha \beta)
\end{equation}
We mentioned the inflation technique in the context of the characterization of
the nonsignaling set $\mathcal{N}$, where a complete set of constraints is
obtained by applying rules
\eqref{Eq:CompatibleTypes}~\eqref{Eq:NoDuplication}. Now, keeping
\eqref{Eq:CompatibleTypes}, \eqref{Eq:CompleteSources} but no longer
requiring~\eqref{Eq:NoDuplication}, we use the fact that classical information
can copied. As with Algorithm~\ref{Alg:NSProbInflation}, the resulting
feasibility problem is a linear program. However, there is a systematic way of
constructing a complete hierarchy of such relaxations.

\begin{algorithm}
  \label{Alg:LProbInflation}
  A complete hierarchy of relaxations for the
  decision problem $\vec{P} \in \mathcal{L}$ is obtained by constructing
  the following sequence of inflations~{\cite{Navascues2020}}. For a given
  $n$, we include $n$ copies of each source, and add as many copies of
  observers as possible saturating conditions~\eqref{Eq:CompatibleTypes}
  and \eqref{Eq:CompleteSources}; this is known as the {\em{web
      inflation}}.
\end{algorithm}

\begin{figure*}
  \centering
  \includegraphics{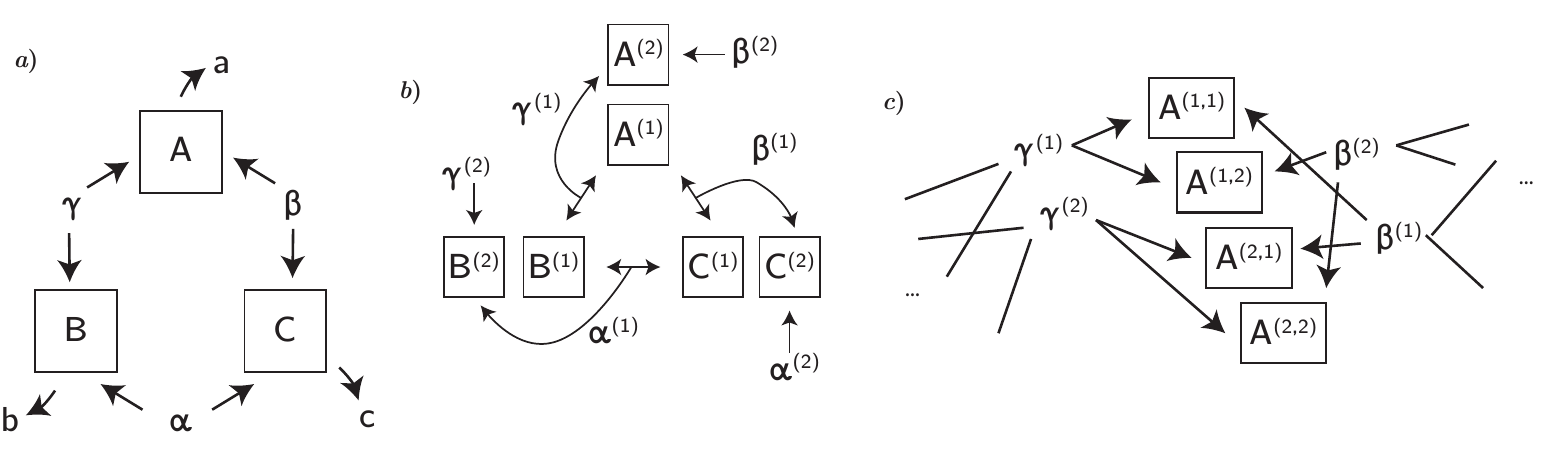}
  \caption{\label{Fig:TriangleLocal}In a), the triangle scenario composed of
  three devices (A, B, C) and three sources ($\alpha, \beta, \gamma$). In b),
  the spiral inflation already mentionned in~{\cite{Wolfe2019}}. In c), part
  of the web inflation for $n = 2$ copies of the sources, where copies of A
  are indexed according to the source copy index.}
\end{figure*}

In Figure~\ref{Fig:TriangleLocal}, we display different inflations for the
triangle scenario. As before, a variant of Algorithm~7 applies to patterns.
\begin{algorithm}
  \label{Alg:LPatInflation}
  The inflation technique applies for patterns as
  well, using either a SAT solver as in
  Algorithm~\ref{Alg:NSPatInflationSat} or filling a bit vector as in
  Algorithm~\ref{Alg:NSPatInflationBitVector}.
\end{algorithm}

There is an additional and often faster algorithm to decide whether a pattern is local or not.

\begin{algorithm}
\label{Alg:LocalTC}
    The possible worlds technique introduced by Fraser in~\cite{Fraser2020} considers the combinatorics of assigning definite valuations to hidden variables, and solves the local set decision problem.
\end{algorithm}

\section{Triangle scenario: results}
\label{Sec:TriangleScenario}
We come back to the triangle scenario of Section~\ref{Sec:TriangleScenarioDef}.
After reduction under symmetry and normalization, we test all 21 orbits. Our
classification is summarized in Figure~\ref{Fig:TriangleClassification}.

\begin{figure}[t]
  \centering
  \includegraphics{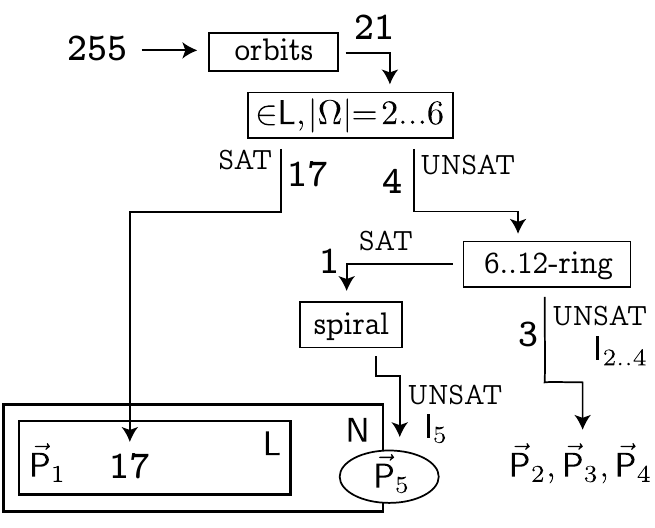}
  \caption{\label{Fig:TriangleClassification}Classification of the
  patterns/orbits in the triangle scenario. For details, see main text.}
\end{figure}

\subsection{Patterns in $\mathsf{L}$}

We apply Algorithm~\ref{Alg:LPatModel} with cardinality 6, which provides a
local model if such a model exists \cite{Rosset2018}. We find that 17 out of 21 orbits have a
local model. We repeat Algorithm~\ref{Alg:LPatModel} with cardinality 2.
Interestingly the results are the same: thus there is no pattern
in~$\mathsf{L}$ that requires a source alphabet of cardinality more than 2,
which is the minimum (cardinality 1 corresponds to a source being omitted).
Representatives of those 17 orbits are given in Appendix~\ref{App:TrianglePatterns}.
These include $\mathsf{P}_{\checked} = [000] + [001] + [010] + [011] + [100] +
[101] + [110] + [111]$ (in all scenarios, the all-$\checked$ pattern is always
local!), the deterministic pattern $\mathsf{P}_{[111]} = [111]$, and the
pattern
\begin{equation}
  \mathsf{P}_1 = [000] + [001] + [010] + [100] .
\end{equation}
The four patterns not in $\mathsf{L}$ are:
\begin{align}
  \mathsf{P}_2 &= [011] + [100], \nonumber \\
  \mathsf{P}_3 &= [011] + [100] + [111], \nonumber \\
  \mathsf{P}_4 &= [011] + [100] + [110] + [111], \nonumber \\
  \mathsf{P}_5 &= [001] + [010] + [100],
\end{align}
noting that $\mathsf{P}_2$ is in the orbit
$(\vec{\mathsf{P}}^{\textup{GHZ}})^G$, cf.~\eqref{Eq:PatGHZOrbit}. The pattern
$\mathsf{P}_5$ corresponds to the pattern of a $W$-type distribution.

\subsection{Patterns not in $\mathsf{N}$}

Using Algorithm~\ref{Alg:NSPatInflationBitVector} and the 6-ring inflation, we
found that $\mathsf{P}_2$, $\mathsf{P}_3$ and $\mathsf{P}_4$ are signaling
enabling. We then increased the relaxation degree to the 9-ring and 12-ring,
and do not find that $\mathsf{P}_5$ is signaling enabling. Is it surprising?
We note that $3 n$-ring inflations only match subgraphs that contain one or
two observers nodes: $\mathsf{P}_{\textup{A}}, \mathsf{P}_{\textup{B}},
\mathsf{P}_{\textup{C}}, \mathsf{P}_{\textup{AB}}, \mathsf{P}_{\textup{AC}},
\mathsf{P}_{\textup{BC}}$. On those marginals, $\mathsf{P}_5$ is
indistinguishable from $\mathsf{P}_1$, and $\mathsf{P}_1 \in \mathsf{N}$
follows from $\mathsf{P}_1 \in \mathsf{L}$.

\subsection{Polynomial inequalities from possibilistic results}

Using Algorithm~\ref{Alg:ExtractCertificate} and the $6$-ring inflation, we
found the following polynomial inequalities, which are respectively violated
by $\mathsf{P}_2$, $\mathsf{P}_3$ and~$\mathsf{P}_4$.

\begin{align}
  I_2 = & P_{\textup{AC}} (11) P_{\textup{A}} (0) + P_{\textup{BC}} (10) P_{\textup{B}}(0)  \nonumber \\
        & - P_{\textup{AB}} (01) P_{\textup{AB}} (10) \geq 0,
\end{align}
\begin{align}
  I_3 = & P_{\textup{AC}} (00) P_{\textup{AC}} (11) + P_{\textup{BC}} (01) P_{\textup{B}} (1) \nonumber \\
      & + P_{\textup{BC}} (10) P_{\textup{BC}} (00) \nonumber \\
      & - P_{\textup{AB}}(01) P_{\textup{AB}} (10) \geq 0,
\end{align}
\begin{align}
  I_4 = & P_{\textup{AC}} (00) P_{\textup{A}} (1) + P_{\textup{BC}} (01) P_{\textup{B}}(1) \nonumber \\
       & - P_{\textup{AB}} (01) P_{\textup{AB}} (10) \geq 0.
\end{align}
The pattern $\mathsf{P}_5$ is detected by the spiral inflation of
Figure~\ref{Fig:TriangleLocal}c, and Algorithm~\ref{Alg:ExtractCertificate}
gives the inequality:
\begin{align}
  I_5 & = P_{\textup{ABC}} (000) \nonumber \\
   & + P_{\textup{AB}} (11) P_{\textup{C}} (1) + P_{\textup{AC}} (11) P_{\textup{B}} (1) \nonumber \\
   &+ P_{\textup{BC}} (11) P_{\textup{A}} (1) \nonumber \\
   &- P_{\textup{A}} (1) P_{\textup{B}} (1) P_{\textup{C}} (1) \geq 0.
\end{align}
However, that certifies that $\mathsf{P}_5 \notin \mathsf{L}$, the status of
this orbit regarding the nonsignaling set is unclear.

\subsection{The case of the $W$-type distribution}
\begin{figure}[t]
  \includegraphics{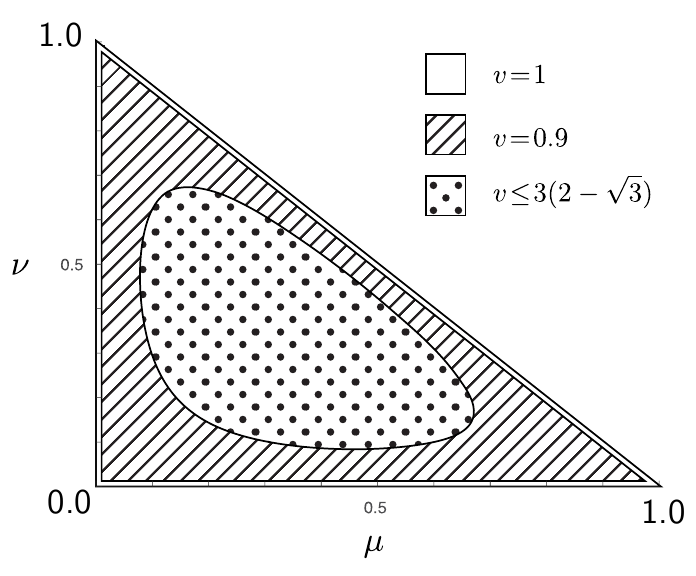}
  \caption{\label{Fig:WVisibility}For $v = 1$, it seems that only a thin
  region around the triangle is feasible for the $6$-ring inflation. As the
  visibility $v$ decreases, this region grows: for $v = 9$, it contains the
  border and the dashed region. When $v \leqslant 3 \left( 2 - \sqrt{3}
  \right) \cong 0.804$, the whole triangle satisfies the $6$-ring inflation.}
\end{figure}

We then test whether {\em{realizations}} $\vec{P}_5^{(a, b)}$ of the pattern
$\mathsf{P}_5$ are in $\mathcal{N}$, using
Algorithm~\ref{Alg:NSProbInflation}:
\begin{equation}
  \vec{P}_5^{(\mu, \nu)} = \mu [001] + \nu [010] + (1 - \mu - \nu) [100],
\end{equation}
where $0 < \mu, \nu, (1 - \mu - \nu) < 1$,
possibly including noise with visibility $v$: 
\begin{equation}
    P_5^{(\mu, \nu, v)} (a b c) = v
P_5^{(\mu, \nu)} (a b c) + (1 - v) / 8.
\end{equation}
In Figure~\todo{\ref{Fig:WVisibility}}, for $v = 1$, it seems at first
sight that most of the considered area is not in $\mathcal{N}$; at $v
\leqslant 3 \left( 2 - \sqrt{3} \right)$, all of it is feasible. By solving
the linear program of Algorithm~\ref{Alg:NSProbInflation} closer and closer to
the border, it becomes clear that the feasible region has a certain thickness
in the neighborhood of the border. While the linear program becomes more and
more numerically instable, we can use exact arithmetic and extract
infeasibility certificates. We plot the results in
Figure~\ref{Fig:WCertificates}.

\begin{figure}[t]
  \includegraphics{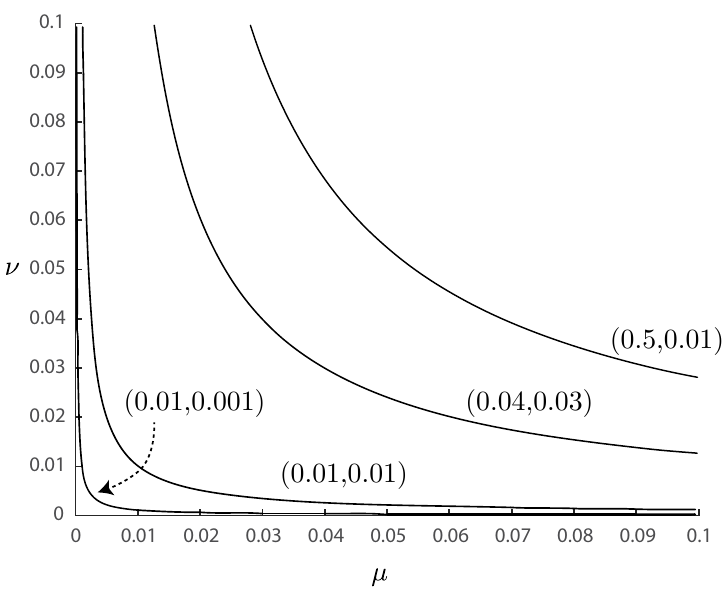}
  \caption{\label{Fig:WCertificates}Certificates found by solving the $6$-ring
  inflation LP for $P_5^{(\mu_0, \nu_0)}$ according to four pairs $(\mu_0,
  \nu_0)$, shown along the certificates.}
\end{figure}

While we could not rule out the pattern $\mathsf{P}_5$ using the possibilistic
semiring, it may well be that all its realizations are ruled out by using
Algorithm~\ref{Alg:NSProbInflation} on a suitable $3 n$-ring inflation. We
leave that question as an open problem.

\section{Square scenario: results}
\label{Sec:SquareScenario}
We address the square scenario with binary outcomes, as shown in Figure~\ref{Fig:Scenarios} and in Figure~\ref{Fig:SquareInflations}a.
Our classification process is summarized in Figure~\ref{Fig:SquareClassification}.
First, the patterns group into 804 orbits under symmetry.
We then verify if the pattern obeys the factorization condition:
\begin{equation}
\label{Eq:SquareFactorization}
\mathsf{P}_\textup{AC}(ac) = \mathsf{P}_\textup{A}(a) \mathsf{P}_\textup{C}(c), 
\mathsf{P}_\textup{BD}(bd) = \mathsf{P}_\textup{B}(b) \mathsf{P}_\textup{D}(d), 
\end{equation}
which should always be satisfied: $285$ orbits fail this test and can be pruned.
There remains $519$ interesting patterns to classify.

\subsection{Patterns in $\mathsf{L}$}

Applying the results of~{\cite{Rosset2018}}, all local models use classical
random variables with an alphabet size of at most 12. Due to computational
constraints, we could only run Algorithm~\ref{Alg:LPatModel} with alphabet
size up to $6$. For 95 orbits, we found a model using
alphabet size 2. For 21 additional orbits, we found a model using alphabet
size 3. Increasing the alphabet size from $3$ to $6$ did not help further.

For the remaining 403 orbits, we cannot rule out the possibility of a model
for an alphabet size from 7 to 12. We turn to the possibilistic inflation technique.

\subsection{Patterns not in $\mathsf{N}$}

We applied Algorithm~\ref{Alg:NSPatInflationSat} on those $403$ orbits
using two different nonsignaling inflations: the 8-ring and the 12-ring, corresponding to the
inflations in Figure~\ref{Fig:SquareInflations}b and Figure~\ref{Fig:SquareInflations}c respectively.
The 8-ring finds $329$ orbits to be signaling-enabling, while the 12-ring finds $19$
additional orbits that are also signaling-enabling.
We are left with $55$ orbits whose status is yet unknown. We then apply
the local 2-web inflation, as described in Figure~\ref{Fig:TriangleLocal}c and Figure~\ref{Fig:SquareInflations}d, with two
copies of each source. 
We have 49 patterns that failed to satisfy the 2-web inflation: for sure,
those patterns are $\notin \mathsf{L}$, but their membership status regarding $\mathsf{N}$
is unknown.
Remains 6 patterns that satisfy the 2-web inflation. At that point, we turn to the possible worlds technique (Algorithm~\ref{Alg:LocalTC}) and find that those 6 patterns are $\notin \mathsf{L}$.
In particular, this means that all the patterns in $\mathsf{L}$ can be realized with models of LHV cardinality 3.

\begin{figure}[t]
  \centering
  \includegraphics{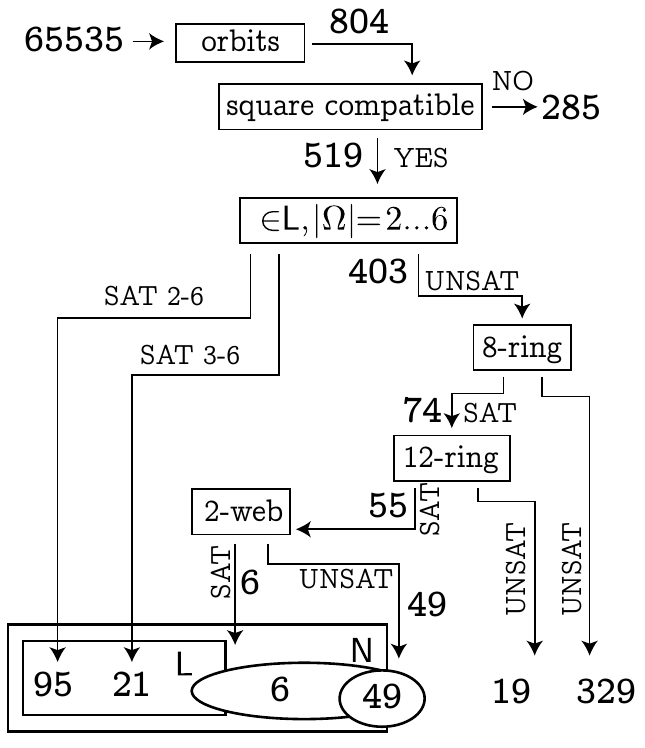}
  \caption{
  \label{Fig:SquareClassification}
  Classification of patterns/orbits in the square scenario, see main text for the process.
  }
\end{figure}

\begin{figure*}
  \centering
  \includegraphics{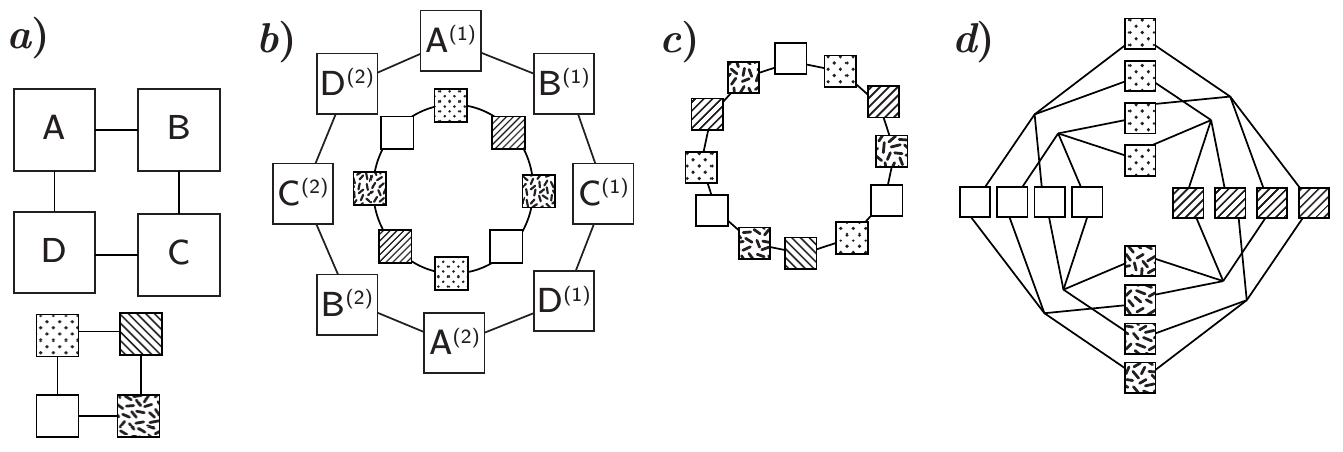}
  \caption{\label{Fig:SquareInflations}
  a) Square scenario, including simplified drawing. b) 8-ring inflation, including simplified drawing. 
  c) Simplified drawing of the 12-ring inflation. d) Simplified drawing of the 2-web inflation.}
  
\end{figure*}

\subsection{Quantum nonlocality in the square network}

Interestingly, the characterisation of patterns that are nonlocal, i.e. not in $\mathsf{L}$, allows us to provide a simple example of a quantum nonlocal distribution in the square network. Of particular interest is the fact the low cardinality of the outcomes, which are all binary (while there are no inputs). Note that it is still an open question whether nonlocality is possible in the triangle network with binary outputs and no inputs, see e.g. \cite{Boreiri2022}.

The construction is based on the well-known Hardy paradox \cite{Hardy}, a demonstration 
of standard quantum Bell nonlocality based on a possibilitic argument. We start with a standard bipartite Bell test. The first party (Alice) receives a binary input $x$ (from another party called Dave) and provides an binary output $a$. Similarly, the second party (Bob) receives a binary input $y$ (from Charlie) and outputs a bit $b$. One usually considers the conditional joint probability $P(ab|xy)$. In Hardy's construction, Alice and Bob perform well-chosen local Pauli measurements on a non-maximally entangled state of two-qubits, see \cite{Hardy} for details. The key point is that the resulting probability distribution has the following feature: 

\begin{equation}
    P(10|10) = P(01|01) = P(00|11)=0 \,
\end{equation}
while all other probabilities are strictly positive, including $P(00|00)$, which leads to a logical contradiction for any local model. 

Now we embed this bipartite Bell test in the square network, adapting the idea of Fritz \cite{Fritz_2012}. We place Alice, Bob, Charlie and Dave in a square network. The source connecting Alice and Bob distributes the required two-qubit entangled state. The source between Alice and Dave distributes the inputs $x$, while the source between Bob and Charlie distributes the input $y$. Note that the source between Charlie and Dave is here useless. Alice and Bob output $a$ and $b$, respectively. Charlie and Dave output $y$ and $x$, respectively. We obtain a quantum distribution on the square with binary outputs. Using the SAT algorithm, it can be checked that the pattern of this quantum distribution is not in $\mathsf{L}$. 

This shows that the square network with binary outputs allows for quantum nonlocality. We note that a similar construction shows that when Alice and Bob share a Popescu-Rohrlich nonlocal box \cite{PR}, the resulting pattern is not in $\mathsf{L}$. This suggests that the square network with binary output could also feature stronger than quantum nonlocality.

\section{Relaxation of the independence assumption}
\label{Sec:Relaxation}
The study of nonlocality in networks assumes that sources are perfectly
independent
\begin{align}
  P_{\bm{s}} \left( \bm{s} \right) = & P_{s_1 \ldots s_{|
  \mathcal{I} |}} (s_1 \ldots s_{| \mathcal{I} |})  \\
  = & P_{s_1} (s_1) \ldots
  P_{s_{| \mathcal{I} |}} (s_{| \mathcal{I} |}) = \prod_{i \in \mathcal{I}} P_{s_i} (s_i) \nonumber
\end{align}
whereas recent works have challenged this assumption by considering that the
sources in the network could be classically correlated~{\cite{Supic2020}} (see also \cite{Chaves2021}). In
this section, we extend inequalities obtained through
Algorithm~\ref{Alg:ExtractCertificate} to be still valid when the independence
assumption is relaxed. We model the correlations among the $\{ s_i \}$ by
introducing another random variable $\lambda = 1, \ldots, \Lambda$ distributed
according to $P (\lambda)$, as in Figure~\ref{Fig:Relaxing}. 
Since the original sources in the network may be affected by the latter, we need to bound its causal influence to avoid that any distribution $P_{\mathcal{O}|\mathbf{s}}$ can be reproduced via a local model by setting ($s_1=\cdots=s_{|\mathcal{I}|} = \lambda$) with $\lambda$ sampled from $P_{\mathcal{O}|\mathbf{s}}$. We therefore impose the condition
\begin{equation}
  \label{Eq:LowerBound} \epsilon_1 \prod_{i \in \mathcal{I}} P_{s_i} (s_i)
  \quad \leq \quad P_{\bm{s} | \lambda} \left( \bm{s} | \lambda
  \right)
\end{equation}
for some $\epsilon_1 \in (0, 1]$. Without loss of generality, we require $P
(\lambda) > 0$. We write $\mathcal{L}_{\epsilon_1}$ the set of probability
distributions satisfying a relaxed condition similar
to~\eqref{Eq:LocalProbabilityDistribution}:
\begin{equation}
  \label{Eq:ObservedProbabilitiesRelaxation} 
  \begin{split}
  P (o_1 \ldots o_{| \mathcal{J}
  |}) = \sum_{\lambda} P (\lambda) {\sum_{\bm{s}}}  P_{\bm{s} |
  \lambda} \left( \bm{s} | \lambda \right) \cdot 
  \\ 
  \prod_j
  P_{\mathcal{O}_j | s_{i_{j, 1}} \ldots s_{i_{j, n_j}}} \left( o_j | s_{i_{j,
  1}} \ldots s_{i_{j, n_j}} \right) .
  \end{split}
\end{equation}
Similarly, we write $\mathsf{L}_{\epsilon_1}$ the set of patterns achievable
with a given $\epsilon_1$. For $\epsilon_1 = 1$, we get back
$\mathcal{L}_{\epsilon_1} =\mathcal{L}$, while for $\epsilon_1 = 0$, any
normalized $\vec{P}$ can be reached by embedding the desired outcomes in
$\lambda = (o_1, \ldots, o_{| \mathcal{I} |})$ and passing the relevant values
through the sources.

We make an interesting observation in the intermediate regime.

\begin{proposition}
  For $0 < \epsilon_1 < 1$, we have $\mathsf{L}_{\epsilon_1} = \mathsf{L}$.
  Thus, if a pattern is not local, it is not local even in the presence in
  limited correlation between the sources.
\end{proposition}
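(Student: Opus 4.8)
The plan is to prove the two inclusions $\mathsf{L} \subseteq \mathsf{L}_{\epsilon_1}$ and $\mathsf{L}_{\epsilon_1} \subseteq \mathsf{L}$ separately. The first is immediate: decreasing $\epsilon_1$ only weakens the constraint~\eqref{Eq:LowerBound}, so $\mathcal{L} = \mathcal{L}_1 \subseteq \mathcal{L}_{\epsilon_1}$ for every $\epsilon_1 \in (0,1]$, and applying the morphism $\varphi$ gives $\mathsf{L} \subseteq \mathsf{L}_{\epsilon_1}$. All the content lies in the reverse inclusion, which I would establish for any $\epsilon_1 > 0$.

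The key structural observation is that in~\eqref{Eq:ObservedProbabilitiesRelaxation} the hidden variable $\lambda$ enters only through the source law $P_{\bm{s}|\lambda}$: the local response functions $P_{\mathcal{O}_j | s_{i_{j,1}} \ldots s_{i_{j,n_j}}}$ carry no $\lambda$ dependence. Hence, starting from any $\vec{P} \in \mathcal{L}_{\epsilon_1}$ realizing a pattern $\vec{\mathsf{P}}$, I would absorb the effect of $\lambda$ into the marginal source distribution
\begin{equation}
  \tilde{P}(\bm{s}) = \sum_\lambda P(\lambda)\, P_{\bm{s}|\lambda}(\bm{s}|\lambda),
\end{equation}
so that the observed distribution factors through a single (generally correlated) source law,
\begin{equation}
  P(o_1 \ldots o_{|\mathcal{J}|}) = \sum_{\bm{s}} \tilde{P}(\bm{s}) \prod_j P_{\mathcal{O}_j | s_{i_{j,1}} \ldots s_{i_{j,n_j}}}(o_j | \ldots).
\end{equation}

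The crux is to show that $\tilde{P}$ may be swapped for an \emph{independent} source law without altering the pattern. Averaging the bound~\eqref{Eq:LowerBound} over $P(\lambda)$ yields $\tilde{P}(\bm{s}) \geq \epsilon_1 \prod_i P_{s_i}(s_i)$ for every $\bm{s}$. Adopting the same convention as for~\eqref{Eq:LocalProbabilityDistribution}, where source values of zero probability are dropped so that the reference marginals obey $P_{s_i}(s_i) > 0$, the hypothesis $\epsilon_1 > 0$ forces $\tilde{P}(\bm{s}) > 0$ on the whole product alphabet; that is, the correlated source law has \emph{full support}. Now a coefficient $P(o_1 \ldots o_{|\mathcal{J}|})$ vanishes exactly when $\prod_j P_{\mathcal{O}_j}(o_j | \ldots) = 0$ for every $\bm{s}$ carrying positive source weight, and full support makes this condition independent of the particular weights. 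The set of vanishing coefficients, hence the pattern $\varphi(\vec{P})$, therefore depends on the response functions alone. I would thus define $\vec{Q} \in \mathcal{L}$ by keeping those response functions but replacing $\tilde{P}$ with the independent product $\prod_i P_{s_i}(s_i)$, which also has full support; $\vec{Q}$ shares the zero pattern of $\vec{P}$, so $\varphi(\vec{Q}) = \varphi(\vec{P}) = \vec{\mathsf{P}} \in \mathsf{L}$.

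The main obstacle is conceptual rather than computational: one must notice that $\lambda$ leaves the response functions untouched, so the only route by which correlations could change the pattern is a change in the support of the source law --- and the lower bound~\eqref{Eq:LowerBound} forbids exactly this as soon as $\epsilon_1 > 0$. The remaining care lies in the support bookkeeping, namely justifying the reduction to strictly positive reference marginals (so that $\prod_i P_{s_i}(s_i)$ and $\tilde{P}$ share full support) and arguing cleanly that two full-support source laws composed with identical response functions induce the same possibilistic image under $\varphi$. No feature of the triangle or square enters, so the argument holds for an arbitrary scenario $\mathbb{S}$.
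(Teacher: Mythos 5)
Your proof is correct and follows essentially the same route as the paper's: the lower bound~\eqref{Eq:LowerBound} with $\epsilon_1>0$ forces the conditional source law to have full support, so at the possibilistic level the correlation through $\lambda$ is invisible and the feasibility condition collapses to the unrelaxed one~\eqref{Eq:PossibilisticLocalSAT}. The only cosmetic difference is that you argue at the probability level (marginalizing over $\lambda$ and explicitly swapping in an independent full-support source law), whereas the paper applies the semiring morphism $\varphi$ directly to~\eqref{Eq:ObservedProbabilitiesRelaxation} and lets $\mathsf{P}(\lambda)=\mathsf{P}_{\bm{s}|\lambda}(\bm{s}|\lambda)=\checked$ do the same work.
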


\begin{proof}
  We apply the semiring morphism $\varphi : \mathbb{R}^{\geqslant 0}
  \rightarrow \mathbb{P}$ of Section~\ref{Sec:PossibilitiesAndPatterns} to
  Eq.~\eqref{Eq:ObservedProbabilitiesRelaxation} to obtain:
  \begin{align}
      & \mathsf{P} (o_1 \ldots o_{| \mathcal{J} |}) = \nonumber \\
      = & \sum_{\lambda}
      \mathsf{P} (\lambda) {\sum_{\bm{s}}}  \mathsf{P}_{\bm{s} |
      \lambda} \left( \bm{s} | \lambda \right) \cdot \nonumber \\
      & \cdot \prod_j
      \mathsf{P}_{\mathcal{O}_j | s_{i_{j, 1}} \ldots s_{i_{j, n_j}}} \left(
      o_j | s_{i_{j, 1}} \ldots s_{i_{j, n_j}} \right) \nonumber \\
       = & \sum_{\lambda} \sum_{\bm{s}} \prod_j
      \mathsf{P}_{\mathcal{O}_j | s_{i_{j, 1}} \ldots s_{i_{j, n_j}}} \left(
      o_j | s_{i_{j, 1}} \ldots s_{i_{j, n_j}} \right) \nonumber \\
       = & \sum_{\bm{s}} \prod_j \mathsf{P}_{\mathcal{O}_j | s_{i_{j,
      1}} \ldots s_{i_{j, n_j}}} \left( o_j | s_{i_{j, 1}} \ldots s_{i_{j,
      n_j}} \right)
    \end{align}
  where we made use of $\mathsf{P}_{\bm{s}} \left( \bm{s} \right)
  = \checked$ for all $\bm{s}$ by assumption, and \ $\epsilon_1 \prod_{i
  \in \mathcal{I}} P_{s_i} (s_i) \quad \leq \quad P_{\bm{s} | \lambda}
  \left( \bm{s} | \lambda \right)$ implies $P_{\bm{s} | \lambda}
  \left( \bm{s} | \lambda \right) = \checked$ for all $\bm{s},
  \lambda$. The final equation is identical
  to~Eq.~\eqref{Eq:PossibilisticLocalSAT}, and thus the feasible sets are the
  same.
\end{proof}

In essence, the condition $\epsilon_1 > 0$ limits the strength of the ``common
cause'' $\lambda$ correlating the sources so that it cannot veto a particular
possible value assignment.

\subsection{Inequalities with relaxation of the independence condition}

Due to the noise present in experimental observations, it is desirable to
derive inequalities that can certify that $\vec{P} \notin \mathcal{L}$ when
the observed $\vec{P}$ is close but does not conform to a particular pattern,
in the presence of some amount of correlation between sources.

As a consequence of (\ref{Eq:LowerBound}) we derive the following upper bound on the correlations between the sources, preventing from deterministic assignments of the sources $\{s_i\}$,
\begin{equation}
  \label{Eq:UpperBound} P_{\bm{s} | \lambda} \left( \bm{s} |
  \lambda \right) \quad \leq \quad \epsilon_2 \prod_{i \in \mathcal{I}}
  P_{s_i} (s_i)
\end{equation}
for some $\epsilon_2 > 1$, which can be deduced from the value of $\epsilon_1
> 0$. Note that $\epsilon_2 = 1$ would lead to $\mathcal{L}_{\epsilon_1,
\epsilon_2} =\mathcal{L}$.

We consider probabilistic inequalities derived from possibilistic certificates
of the form~\eqref{Eq:PossibilisticCertificate}, obtained first without
relaxing the independence condition. They are written according to valuations
of the inflation scenario:
\begin{align}
  P (T) & \leqslant P (E_1 \vee \ldots \vee E_m) \nonumber \\
  & \leqslant P (E_1) + \cdots + P
  (E_m) .
\end{align}
All of these valuations are AI-expressible, meaning that they correspond to
monomials in $\bm{M}$, which we write $M_T, M_{E_1}, \ldots, M_{E_m}$.
We write the monomial value evaluated on a test distribution $\vec{P}$ as $M_T
(\vec{P}),$ etc.

The original inequality assuming independence of sources is written:
\begin{equation}
  M_T (\vec{P}) \leqslant M_{E_1} (\vec{P}) + \cdots + M_{E_m} (\vec{P}) .
\end{equation}
Consider $P (V)$ where $V$ is a valuation AI-expressible in the inflation
scenario. It corresponds to a monomial $M_V \in \bm{M}$ that can be
computed from the probability distribution of the original scenario. Abusing
the notation, we write $\deg V = \deg M_V$ the number of connected components
in the subgraph containing as observers only the one occuring in the
valuation.

Now, sources in the inflation scenario can be correlated through $\lambda$, as
shown in Figure~\ref{Fig:Relaxing}.

\begin{lemma}
  Let $V$ be a AI-expressible valuation. Then, there exists a $n$ such that:
  \begin{equation}
    \epsilon^n_1 M_V (\vec{P}_{\textup{original}}) \leqslant P (V) \leqslant
    \epsilon_2^n M_V (\vec{P}_{\textup{original}})
  \end{equation}
\end{lemma}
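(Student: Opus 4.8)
The plan is to prove the two-sided bound with $n = \deg V$, the number of connected components of $V$, by applying the relaxation bounds \eqref{Eq:LowerBound} and \eqref{Eq:UpperBound} once per component. First I would unpack the structure of an AI-expressible valuation: after deleting the observers outside $V$ and the orphan sources, the surviving graph splits into $d = \deg V$ connected components $\mathcal{H}'_1, \ldots, \mathcal{H}'_d$, each matching a subgraph of $\mathbb{S}$ and, crucially, drawing on a disjoint (``fresh'') set of source copies. Writing $P(V)$ in the relaxed inflation model in analogy with \eqref{Eq:ObservedProbabilitiesRelaxation} --- an average over the common cause $\lambda$, a sum over the relevant source copies weighted by $P_{\bm{s}|\lambda}$, and the product of the observers' response functions --- the response product factorizes across the $d$ components precisely because they share no source.

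The heart of the argument is then to bound, for each fixed $\lambda$, the contribution of a single component. Since $\mathcal{H}'_h$ matches a subgraph of the original scenario, its source copies inherit the relaxation constraint: marginalizing \eqref{Eq:LowerBound} and \eqref{Eq:UpperBound} down to that component's sources (the non-component source marginals sum to one) shows that the $\lambda$-conditional weight of component $h$ lies between $\epsilon_1$ and $\epsilon_2$ times the independent-source weight $\prod_i P_{s_i}$. Summing against the response function then sandwiches each component factor between $\epsilon_1 P_{\mathcal{H}_h}(\bm{h}_h)$ and $\epsilon_2 P_{\mathcal{H}_h}(\bm{h}_h)$, where $P_{\mathcal{H}_h}$ is exactly the marginal appearing in the monomial decomposition \eqref{Eq:Identify}. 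Multiplying the $d$ factors gives, for every value of $\lambda$, the chain $\epsilon_1^{d} M_V(\vec{P}_{\textup{original}}) \leqslant (\cdots) \leqslant \epsilon_2^{d} M_V(\vec{P}_{\textup{original}})$, and averaging over $\lambda$ with $\sum_\lambda P(\lambda) = 1$ preserves the bound, yielding the claim with $n = d = \deg V$.

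The step I expect to be the main obstacle is justifying the per-component factorization under the common cause $\lambda$: the relaxation as stated only imposes a single joint bound \eqref{Eq:LowerBound} on the original $|\mathcal{I}|$ sources, so I must make precise how this assumption lifts to the enlarged source set of the inflation. The clean route is to posit that, conditioned on $\lambda$, the inflation's source copies organize into independent instances of the original source set --- each instance satisfying the single-$\epsilon$ bound --- so that the disjoint groups feeding distinct connected components are conditionally independent given $\lambda$. Under that model the factorization and the accumulation of one $\epsilon$ per component are automatic; the care lies in confirming that the inflations actually used (the rings and webs of the previous sections) admit such a decomposition, and in checking the edge case $d = 1$, where the bound reduces to a single application of \eqref{Eq:LowerBound}--\eqref{Eq:UpperBound} and reproduces the relaxed marginal of a connected subgraph.
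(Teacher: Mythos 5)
Your proposal is correct and follows essentially the same route as the paper: decompose $V$ into its connected components, use the fact that AI-expressibility forces them onto disjoint source copies (taken conditionally independent given $\lambda$, each satisfying the single-$\epsilon$ bound) to factorize $P(V)$, sandwich each component factor between $\epsilon_1$ and $\epsilon_2$ times the corresponding monomial via the marginalized bounds \eqref{Eq:LowerBound}--\eqref{Eq:UpperBound}, then multiply and average over $\lambda$. The ``obstacle'' you flag is resolved in the paper exactly as you suggest, by writing the inflation's source copies as conditionally i.i.d.\ instances $P(\bm{s}^{(1)}|\lambda)\cdots P(\bm{s}^{(M)}|\lambda)$ of the relaxed source distribution.
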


\begin{proof}
  Let $V_1, \ldots, V_N$ be the valuations of the connected components in the
  inflation subgraph addressed by $V$, and $M_{V_1}, \ldots, M_{V_N} \in
  \bm{M}$ the corresponding monomials in the coefficients of the
  original probability distribution. We have
  \begin{equation}
\begin{split}
    P (V) = \sum_{\lambda, \bm{s}} P (\lambda) P \left(
    \bm{s}^{(1)} | \lambda \right) \ldots P \left( \bm{s}^{(M)} |
    \lambda \right) \cdot \\ 
    P \left( V_1 | \bm{s}_{V_1} \right) \ldots P \left(
    V_N | \bm{s}_{V_N} \right),
    \end{split}
  \end{equation}
  where $\bm{s}^{(1)}, \ldots, \bm{s}^{(n)}$, for a given $n$, are
  the $n$ copies of sources present in the inflation (without loss of
  generality, we allow source copies not connected to any observer), and
  $\bm{s}_{V_{\ell}}$ regroups the sources needed to compute the outcome
  values present in the valuation component $V_{\ell}$.
  
  The connected components $V_1, \ldots, V_N$ are connected to distinct
  sources, and contain at most one copy of each original source, as all the
  components are AI-expressible. Without modifying $P (V)$, we relabel sources
  so that $\bm{s}_{V_1}$ contains only sources of the form
  $\mathcal{S}_i^{(1)}$, $\bm{s}_{V_2}$ contains only sources of
  the form $\mathcal{S}_i^{(2)}$ and so on. We have:
  \begin{align}
    P (V) = & \sum_{\lambda} P (\lambda) \left[ \sum_{\bm{s}_{V_1}} P
    \left( \bm{s}_{V_1} | \lambda \right) P \left( V_1 |
    \bm{s}_{V_1} \right) \right] \ldots \nonumber \\
    & \ldots \left[ \sum_{\bm{s}_{V_N}}
    P \left( \bm{s}_{V_N} | \lambda \right) P \left( V_N |
    \bm{s}_{V_N} \right) \right] .
  \end{align}
  We now apply~\eqref{Eq:LowerBound} and~\eqref{Eq:UpperBound} to each term
  $\sum_{\bm{s}_{V_{\ell}}} P \left( \bm{s}_{V_{\ell}} | \lambda
  \right) P \left( V_{\ell} | \bm{s}_{V_1} \right)$ and obtain:
  \begin{align}
    \epsilon_1 M_{\ell} (\vec{P}_{\textup{original}}) & \leqslant
    \sum_{\bm{s}_{V_{\ell}}} P \left( \bm{s}_{V_{\ell}} | \lambda
    \right) P \left( V_{\ell} | \bm{s}_{V_{\ell}} \right) \nonumber \\
    & \leqslant \epsilon_2 M_{\ell} (\vec{P}_{\textup{original}})
  \end{align}
  and thus:
  \begin{equation}
    \epsilon_1^n M_V (\vec{P}_{\textup{original}}) \leqslant P (V) \leqslant
    \epsilon_2^n M_V (\vec{P}_{\textup{original}}) .
  \end{equation}
  
\end{proof}

Using this last lemma, we are ready to write an inequality valid in presence
of correlations between sources:
\begin{align}
  \epsilon_1^n M_T (\vec{P}) & \leqslant P (T) \\
   & \leqslant P (E_1) + \cdots + P(E_m) \nonumber \\
   & \leqslant \epsilon_2^n [M_{E_1} (\vec{P}) + \ldots + M_{E_m}
  (\vec{P})] . \nonumber
\end{align}
As an example, the inequality~\eqref{Eq:CutIneq} comes from a certificate of
the form~\eqref{Eq:Certificate} which is interpreted probabistically
as~\eqref{Eq:PreCutIneq}:
\begin{equation}
  P_{A^{(1)} C^{(1)}} (01) \leqslant P_{A^{(1)} B^{(1)}} (01)  + P_{B^{(1)}
  C^{(1)}} (01),
\end{equation}
and translates to
\begin{equation}
  (\epsilon_1)^2 P_A (0) P_C (1) \leqslant \epsilon_2 [P_{A B} (01) + P_{B C}
  (01)] .
\end{equation}
\begin{figure}[t]
  \centering
  \includegraphics{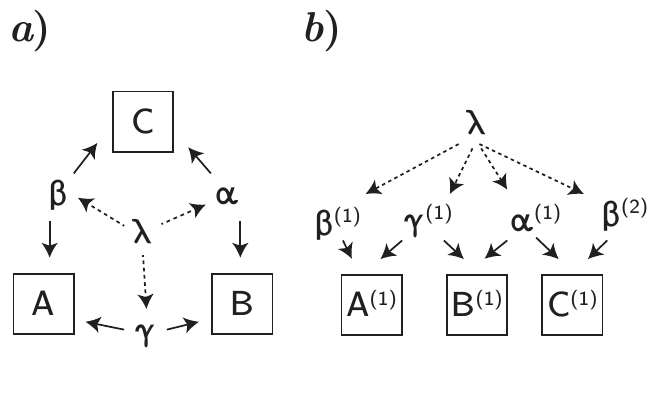}
  \caption{\label{Fig:Relaxing}Relaxing the independence assumption in the
  triangle scenario}
\end{figure}

\section{Conclusion}
We explored the possibilistic approach to network locality. 
We derived efficient algorithms to test whether a given possibilistic pattern is compatible with local (classical) or nonsignaling resources. We also considered the action of the
symmetries of the scenario on those patterns to group them into orbits as to simplify the problem and hence reduce its
computational costs.

In turn, we applied these methods to the triangle and square networks. In the case of the triangle network with binary outcomes, most of the orbits are local, and
we could find explicit local models using binary classical variables. We found that
three orbits were signaling-enabling (hence incompatible with the triangle network), while the orbit corresponding to the so-called ``W pattern'' exhibits
an interesting feature. It cannot be revealed as signaling-enabling, however, it seems that
the vast majority of its realizations are indeed signaling-enabling at a low ring inflation level.
This is our first open question: can any realization of the W pattern be detected (as signaling enabling) by a ring inflation?

In the case of the square network with binary outcomes, the majority of patterns is signaling enabling; some
of the patterns do not even respect the independence relation between observers. We listed the $116$ patterns that are compatible with a local model, and found that all of them can be reached with local variables of cardinality 3 at most. Additionally, there are $55$ patterns which are provably nonlocal, yet we do not know if they are non-signaling.
This ambiguity is a fundamental limit of the current algorithms: we can find whether $\vec{\mathsf{P}} \notin \mathsf{N}$,
but we do not have an algorithm to construct models of $\vec{\mathsf{P}} \in \mathsf{N}$. This is our second open question. 

As an application of our methods, we provided an example of quantum nonlocality in the square network with binary outcomes. This represents an instance of a quantum nonlocal distribution with low output cardinality.
Are there nonlocal patterns in other orbits that have nevertheless a quantum realization?

Finally, we described how the inequalities recovered from possibilistic certificates are robust against correlations between
the sources, relaxing the independence condition typically used in network locality. It would be interesting to see if this 
approach can be generalized to correlations between non-signaling sources.

As future avenues for research, there are natural scenarios to consider next under the possibilistic angle: other
topologies, going beyond binary outcomes, adding inputs (different measurement settings) to the observers. However,
even with the reduction coming from the grouping into orbits, the number of orbits will be pretty large.
In the triangle scenario, the most interesting patterns. such as the W pattern, were also the patterns symmetrical under
permutation of parties. Would it be then interesting in those scenarios 
to study a subset of orbits obeying a particular symmetry?

\medskip

\emph{Acknowledgements.---} We thank Elie Wolfe for bringing to our attention the possible worlds technique of Ref. \cite{Fraser2020} and sharing his code implementation, as well as for comments on a first version of this manuscript. We also thank Jean-Daniel Bancal, Marie Ioannou and Eloïc Vallée for discussions. We acknowledge financial support from the Swiss National Science Foundation (project 2000021\_192244/1 and NCCR SwissMAP) and from the European Union’s Horizon 2020 research and innovation program under the Marie Skłodowska-Curie grant agreement No 956071.

\appendix
\section{Triangle patterns generation and classification}
\label{App:TrianglePatterns}
Source code available at \url{https://github.com/Possibilistic-network-nonlocality/main/tree/main/triangle%20scenario}
\section{Square patterns generation and classification}
\label{App:SquarePatterns}
Source code available at \url{https://github.com/Possibilistic-network-nonlocality/main/tree/main/square%20scenario}

\bibliographystyle{quantum}
\bibliography{biblio}

\end{document}